\newtheorem{theorem}{Theorem}
\newtheorem{lemma}{Lemma}
\title{Improved Approximation Algorithms for Chromatic and Pseudometric-Weighted Correlation Clustering}
\author{ 
 \thanks{Authors are listed alphabetically.}
  Chenglin Fan\thanks{ Department of Computer Science and Engineering, Seoul National University, Seoul, 08826, South Korea. 
   \texttt{fanchenglin@snu.ac.kr}  } \and
    Dahoon Lee\thanks{Department of Mathematical Sciences, Seoul National University, Seoul, 08826, South Korea. \texttt{dahoon46@snu.ac.kr}} \and
  Euiwoong Lee\thanks{Computer Science and Engineering Division, University of Michigan, Ann Arbor, MI 48109, USA. \texttt{euiwoong@umich.edu}}
}
\date{}
\begin{document}

\maketitle

\begin{abstract}
 Correlation Clustering (CC) is a foundational problem in unsupervised learning that models binary similarity relations using labeled graphs. While classical CC has been widely studied, many real-world applications involve more nuanced relationships, either multi-class categorical interactions or varying confidence levels in edge labels. To address these, two natural generalizations have been proposed: \emph{Chromatic Correlation Clustering}, which assigns semantic colors to edge labels, and \emph{pseudometric-weighted Correlation Clustering}, which allows edge weights satisfying the triangle inequality. In this paper, we develop improved approximation algorithms for both settings. Our approach leverages LP-based pivoting techniques combined with problem-specific rounding functions. For the pseudometric-weighted correlation clustering problem, we present a tight $\frac{10}{3}$-approximation algorithm, matching the best possible bound achievable within the framework of standard LP relaxation combined with specialized rounding. For the Chromatic Correlation Clustering (CCC) problem, we improve the approximation ratio from the previous best of $2.5$ to $2.15$, and we establish a lower bound of $2.11$ within the same analytical framework, highlighting the near-optimality of our result.
\end{abstract}

\section{Introduction}

Clustering is a fundamental task in unsupervised learning, where the goal is to partition a set of objects into groups based on their pairwise relationships. One prominent problem in this domain is \emph{Correlation Clustering} (CC)~\cite{bansal2004correlation}, which models binary similarity/dissimilarity between items using an edge-labeled graph: similar pairs are marked with a `$+$' label and dissimilar pairs with a `$-$'. The objective is to partition the nodes to minimize disagreements—i.e., cases where the partitioning contradicts the edge labels. Due to its flexibility in not requiring a predefined number of clusters, CC has been widely utilized in various areas such as   detecting communities in networks~\cite{Chen2012}, inferring labels from user interactions~\cite{Agrawal2009,Chakrabarti2008} and  resolving ambiguous entities~\cite{Kalashnikov2008}.

However, classic CC models only binary relationships, which is insufficient for many practical applications. For example, in a social network, edges may represent diverse relationship types such as ``colleague,'' ``classmate,'' or ``family.'' To address this limitation, Bonchi et al.~\cite{bonchi2012chromatic} introduced the \emph{Chromatic Correlation Clustering }(CCC) problem, which generalizes CC to multi-class categorical settings. In CCC, the input is an edge-colored graph where each color represents a different relationship type. The goal is to cluster the nodes and assign a single color to each cluster such that the number of \emph{disagreements}—edges whose color does not match the cluster's assigned color, or edges that should be separated—is minimized. CCC has wide applications in link classification, entity resolution, and clustering in bioinformatics~\cite{bonchi2012chromatic,anava2015chromatic, klodt2021color}.

In parallel, another important generalization of CC is the \emph{weighted correlation clustering problem}, where edges are associated with weights reflecting the reliability or cost of violating a given label. When weights are unrestricted, obtaining a constant-factor approximation is known to be hard (under the Unique Games Conjecture)~\cite{khot2002ugc}. However, when edge weights form a \emph{pseudometric}—i.e., they satisfy the triangle inequality—constant-factor approximations become feasible. This weighted setting more faithfully models scenarios where not all edges are equally trustworthy.

\subsection{Related Works}
The \emph{Correlation Clustering}  problem has been widely studied since its introduction~\cite{ben-dor1999clustering}, and it is known to be APX-hard, leading to efforts to develop approximation algorithms.
Early work by Bansal, Blum, and Chawla introduced a constant-factor approximation algorithm~\cite{bansal2004correlation}. Charikar et al.~\cite{CHARIKAR2005360} improved this to a 4-approximation using linear programming. Ailon, Charikar, and Newman then introduced the \emph{Pivot} algorithm~\cite{ailon2008aggregating}, which achieved a 3-approximation in linear time. Chawla et al.~\cite{chawla2015near} further improved this to 2.06 using more refined LP-rounding techniques.
More recently, researchers have surpassed the 2-approximation barrier. Cohen-Addad, Lee, and Newman~\cite{cohenaddad2022sherali} used the \emph{Sherali-Adams} hierarchy to develop a $(1.994 + \varepsilon)$-approximation, while Cohen-Addad et al.~\cite{cohenaddad2023preclustering} proposed \emph{preclustering}, which improved the approximation to $(1.73 + \varepsilon)$.
The most recent breakthrough by Cao et al.~\cite{cao2024clusterlp} introduced the \emph{cluster LP}, which unifies all known LP relaxations for CC. They show that this can be approximated efficiently using preclustering, achieving a $(1.437 + \varepsilon)$-approximation, the best known guarantee for CC so far. In a more recent work~\cite{cao2024clusterlp}, they introduced a new approach to find a feasible solution for the cluster LP in sublinear time.

Chromatic Correlation Clustering  is an extension of the classical Correlation Clustering  problem, where edge colors represent different types of relationships. Bonchi et al.~\cite{bonchi2012chromatic} introduced CCC with a heuristic lacking guarantees. Anava et al.~\cite{anava2015chromatic} gave a 4-approximation via LP rounding, plus two practical methods: Reduce and Cluster (RC, ratio 11) and Deep Cluster (DC). Klodt et al.~\cite{klodt2021color} showed that Pivot~\cite{ailon2008aggregating} yields a 3-approximation and that RC achieves 5, but Pivot’s color-agnostic design limits performance; they instead proposed Greedy Expansion (GE), a heuristic effective in practice without guarantees.
More recently, Xiu et al.~\cite{XiuHTCH22} developed a 2.5-approximation algorithm for CCC based on a linear programming approach, improving upon the previous best-known ratio. They also introduced a greedy heuristic that achieves strong empirical results.

In modern data analysis, correlation clustering must often be performed under computational constraints such as limited memory or streaming access to data.
Consequently, substantial research has focused on crafting clustering algorithms specifically tailored for dynamic, streaming, online, and distributed settings~\citep{lattanzi2017consistent, fichtenberger2021, jaghargh2019, cohen2019fully, guo2021distributed, cohen2022online, assadi2022sublinear, lattanzi2021parallel, behnezhad2022, behnezhad2023, bateni2023,cohen2024dynamic, braverman2025fully}.

\subsection{Our Results}
\paragraph{Our Contributions.} In this work, we present improved approximation algorithms for both the CCC and pseudometric-weighted CC problems.
\begin{itemize}
\item For the \emph{pseudometric-weighted correlation clustering} problem, we develop a refined LP-based pivoting algorithm that achieves a tight \textbf{$\frac{10}{3}$-approximation}. We further prove that this approximation factor is \emph{optimal} within the standard LP relaxation framework combined advanced rounding functions.

\item For the \emph{Chromatic Correlation Clustering } problem, we enhance the LP-based method through a new analysis that yields a \textbf{$2.15$-approximation}, improving upon the previous best bound of $2.5$ by Xiu et al.~\cite{XiuHTCH22}. We also establish a lower bound of \textbf{$2.11$} within the same analytical framework, underscoring the near-optimality of our approach.
\end{itemize}

Both results are obtained by extending and unifying the triple-based analysis of LP-rounding schemes. Our work improves the theoretical guarantees for two natural and practically motivated generalizations of correlation clustering and contributes new insights into their structural and algorithmic properties.

\paragraph{Technical Overview.}  
Our algorithms for both Chromatic Correlation Clustering (CCC) and pseudometric-weighted Correlation Clustering (CC) build on linear programming (LP) relaxations and a unified triple-based rounding framework~\cite{chawla2015near} . Below, we outline the key technical insights:

\noindent \textbf{Pseudometric-Weighted CC:}  
The upper bound for the approximation factor $10/3$ is derived using the LP-based Pivot algorithm and a more careful rounding function. 
For the lower bound, 
By assuming the existence of an $\alpha$-approximation and analyzing carefully constructed hard instances, the technique derives necessary conditions that any rounding function must satisfy. These conditions expose inherent conflicts, demonstrating that $\alpha$ cannot be arbitrarily small. In particular, the analysis establishes that $\alpha$ must be at least $\frac{10}{3}$. The core idea is to identify instance configurations that induce contradiction between the properties required of the rounding functions, ultimately leading to this lower bound on $\alpha$.

\noindent \textbf{Chromatic Correlation Clustering (CCC):}  
 Building on the LP formulation introduced by Xiu et al.~\cite{XiuHTCH22}, which jointly encodes fractional cluster membership and color assignments.
The decoupling of color assignment from cluster formation, allowing us to preserve color structure without entangling it with clustering decisions. Using a triple-based analysis, we introduce tailored rounding functions—particularly for neutral edges—to better align the rounding behavior with the LP’s structure and avoid overcounting. This careful handling of intra-color, conflicting, and neutral edges reduces the approximation factor from 2.5 to 2.15. Our lower bound analysis builds on the general triple-based framework, augmented with structural insights specific to the LP-CCC algorithm and its associated LP solution. We carefully define the cost and LP contribution of each edge type—particularly neutral edges—and construct adversarial instances that expose limitations of any rounding strategy.


\paragraph{Paper Organization.} 
The remainder of the paper is structured as follows: 
\textbf{Section~\ref{sec:pre}} introduces the problem formulations and LP relaxations for both pseudometric-weighted and chromatic correlation clustering. 
\textbf{Section~\ref{sec:algo}} presents our approximation algorithms and outlines their design. 
\textbf{Section~\ref{sec:round}} defines the rounding functions used in the LP-based algorithms. 
\textbf{Section~\ref{sec:analysis}} provides a detailed triple-based analysis of the approximation guarantees. 
We conclude with a summary and discussion in \textbf{Section~\ref{sec:con}}.
\section{Preliminaries}
\label{sec:pre}
The \emph{correlation clustering} (CC) problem takes as input a signed undirected graph \( G = (V, E=E^+\uplus E^-) \), where each edge \( e=uv \in E \) is assigned a sign `$+$' or `$-$', described by $e\in E^+$ or $e\in E^-$.
The objective is to find a partition of the nodes such that the number of \emph{disagreements}---i.e., negative edges within the same cluster and positive edges between different clusters---is minimized.
In other words, the cost of the clustering $\mathcal{C}$ is as follows:
\[\text{obj}(\mathcal{C}):=\sum_{uv \in E^+} x_{uv} + \sum_{uv \in E^-} (1 - x_{uv}),\]
where \( x_{uv} = 0 \) indicates that there exists $C\in\mathcal{C}$ such that $u,v\in C$, and \( x_{uv} = 1 \) otherwise.

CC has a standard LP relaxation leveraging the viewpoint on $x$ as a discrete metric between partitions.
Since the $x$ above satisfies the triangle inequality, we can relax the range of $x$ from $\{0,1\}$ to $[0,1]$, resulting in the following LP:
\begin{align}
\text{minimize}\quad&\sum_{uv \in E^+} x_{uv} + \sum_{uv \in E^-} (1 - x_{uv}) \label{eq:LP_CC}\tag{CC-LP}\\
\text{subject to}\quad &x_{uv} + x_{vw} \geq x_{wu}, \\
    &x_{uv} \in [0, 1].
\end{align}
The integrality gap of~\ref{eq:LP_CC} on a complete graph is known to be $2$~\cite{CHARIKAR2005360}, which indicates that the standard LP-based algorithm cannot obtain a better approximation factor below $2$.

\subsection{Pseudometric-weighted Correlation Clustering}

The \emph{weighted Correlation Clustering}  problem is a generalization of the classical CC problem in which each edge is associated with a nonnegative violation cost. Specifically, for each edge $uv$, a weight $w_{uv} \geq 0$ is provided, and violating the edge’s label (either `$+$’ or `$-$’) incurs a penalty of $w_{uv}$. This differs from the standard setting, where all violations incur a uniform cost of $1$. The weighted variant allows us to encode edge-wise reliability: when $w_{uv}$ is large, it is more reasonable to follow the given label between $u$ and $v$.

However, assuming the \emph{Unique Games Conjecture}, no $O(1)$-approximation algorithm exists for the general weighted case~\cite{DEMAINE2006172}. An exception occurs when the weight function satisfies the triangle inequality, i.e., the weights form a \emph{pseudometric}. In this \emph{pseudometric-weighted} setting, a constant-factor approximation is known~\cite{DBLP:conf/soda/CharikarG24}. Following the analysis of Charikar and Gao with $L = 2$ yields an approximation factor of
$
\overline{B}_{\text{HR}} + \frac{1}{\frac{1}{3}} \leq \frac{4}{\frac{1}{3}} + 2(L - 1) + \frac{1}{\frac{1}{3}} = 17,
$
since the second type of charge occurs at most $L - 1 = 1$ time in the charging scheme.
The following is a natural LP relaxation of the weighted CC problem, extending~\eqref{eq:LP_CC}:

\begin{align}
    \text{minimize}\quad &\sum_{uv \in E^+} w_{uv} \cdot x_{uv} + \sum_{uv \in E^-} w_{uv} \cdot (1 - x_{uv}) \label{eq:LP_wCC} \tag{wCC-LP} \\
    \text{subject to}\quad &x_{uv} + x_{vw} \geq x_{wu}, \\
    &x_{uv} \in [0, 1].
\end{align}

Here, the variable $x$ can be viewed as defining a pseudometric over the vertex set, representing the distance between clusters.
Since CC on bipartite graphs has an integrality gap of 3~\cite{chawla2015near}, and is a special case of pseudometric-weighted CC, the LP relaxation~\eqref{eq:LP_wCC} for pseudometric-weighted CC also has an integrality gap of at least 3.

\subsection{Chromatic Correlation Clustering Problem}

The \emph{Chromatic Correlation Clustering}  problem is a variant of the classical CC problem in which each cluster is additionally assigned a color~\cite{bonchi2012chromatic}. The input includes a set of $L$ possible colors, as well as a special color $\gamma$ that denotes that two vertices should not be placed in the same cluster—analogous to a negative (`$-$') edge in the classical CC setting. When $L = 1$ (i.e., a single cluster color), CCC reduces to the standard CC problem.

The following is a linear programming (LP) relaxation of the CCC problem~\cite{bonchi2012chromatic}:
\begin{align}
    \text{minimize}\quad& \sum_{\phi(uv)\neq \gamma} x_{uv}^{\phi(uv)} + \sum_{\phi(uv) = \gamma} \sum_{c \in L} (1 - x_{uv}^c) \label{eq:LP_CCC} \tag{CCC-LP} \\
    \text{subject to}\quad 
    &x_{uv}^c \geq x_u^c,\, x_v^c, \label{cond:CCC_evdom} \\
    &x_{uv}^c + x_{vw}^c \geq x_{wu}^c, \label{cond:CCC_metric} \\
    &\sum_{c \in L} x_u^c = |L| - 1, \label{cond:CCC_chroma} \\
    &x_u^c,\, x_{uv}^c \in [0,1]. \label{cond:CCC_bd}
\end{align}

Here, the variables \( x_u^c \) and \( x_{uv}^c \) are soft assignments:

\begin{itemize}
    \item \( 1-x_u^c \in [0,1] \) represents the fractional assignment of vertex \( u \) to a cluster of color \( c \).
    \item \(1- x_{uv}^c \in [0,1] \) indicates the fractional agreement between vertices \( u \) and \( v \) under color \( c \).
\end{itemize}

These variables measure the likelihood of vertices or edges being assigned to a color, with \( \{1 - x_u^c\}_{c \in L} \) forming a probability distribution over the colors assigned to vertex \( u \), subject to constraints~\eqref{cond:CCC_chroma} and~\eqref{cond:CCC_bd}.

There is also a geometric interpretation of these variables. Consider $L$ discrete pseudometric spaces $(V_c, d_c)$ where $V_c = \{u_c : u \in V\}$, and vertex $u$ is connected to $u_c$ with a link of length $x_u^c$. Then, $x_{uv}^c$ represents the \emph{bottleneck distance} between $u$ and $v$, conditioned on traversing the auxiliary connections $u \rightarrow u_c$ and $v \rightarrow v_c$. This view generalizes the classical CC setting, where the cluster-wise discrete metric can be regarded as a special case of bottleneck distances.

Since CCC generalizes the standard CC problem, the integrality gap of~\eqref{eq:LP_CCC} is at least as large as that of~\eqref{eq:LP_CC}, which is 2.

\section{Approximation Algorithm}\label{sec:algo}
Building on the LP formulations introduced in the previous sections, we now present approximation algorithms for both pseudometric-weighted CC and CCC settings.


\begin{algorithm}
    \caption{\textsc{LP-Pivot}}
    \label{alg:pivot}
    \begin{algorithmic}
    \State \textbf{Input:} Graph $G=(V,\,E=E^+\uplus E^-\uplus E^\circ)$, LP solution $\{x_{uv}\}_{uv\in E}$.

    \State \textbf{Output:} Clustering $\mathcal{C}$ of $V$.
    \State
    
    \State Pick a pivot $v\in V$ uniformly at random.
    \State Set $C=\{v\}$.
    \For{$u\in V\backslash\{v\}$,}
        \State Set $p_{uv}$ as following:
        \[p_{uv}=\begin{cases}
            f^+(x_{uv}),&uv\in E^+;\\
            f^-(x_{uv}),&uv\in E^-;\\
            f^\circ(x_{uv}),&uv\in E^\circ.
        \end{cases}\]
        \State Update $C\leftarrow C\cup\{u\}$ with probability $1-p_{uv}$.

    \EndFor
    \State
    \Return $\mathcal{C}=\{C\}\cup\textsc{LP-Pivot}(G|_{V\backslash C},x|_{V\backslash C})$.
    \end{algorithmic}
\end{algorithm}

The algorithm for the pseudometric-weighted CC problem extends the classical LP-based pivoting method.
It uses the \textsc{LP-Pivot} algorithm, given in Algorithm~\ref{alg:pivot}, which recursively selects a pivot vertex and forms clusters via randomized rounding based on LP distances, along with the specific rounding functions~\eqref{eq:round_wCC}.
For each edge $uv$ incident to the pivot $v$, the probability of $u$ being merged with $v$ into a same cluster is determined by not only on the LP solution $x_{uv}$, but also on the edge’s type as specified in the input.
Rounding functions translate the LP solution ${x_{uv}}$ to a rounding probability to guarantee a constant-factor approximation by controlling the expected clustering cost.

The \textsc{LP-CCC} algorithm for CCC, given in Algorithm~\ref{alg:CCC}, first partitions the vertices according to their LP-derived color distributions.
It then applies the \textsc{LP-Pivot} algorithm, with edge types partitioned by color.
CCC-specific rounding functions (equations~\ref{eq:round_CCC_pm} and \ref{eq:round_CCC_0}) are used to account for the color structure, and the final clustering is obtained by combining the color class clusters.

\begin{algorithm}
    \caption{\textsc{LP-CCC}}
    \label{alg:CCC}
    \begin{algorithmic}
    \State \textbf{Input:} Graph $G=(V,\,E)$, color function $\phi:E\rightarrow L\cup\{\gamma\}$, LP solution $\{x_u^c\}_{u\in V,\,c\in L}$ and $\{x_{uv}^c\}_{uv\in E,\,c\in L}$.

    \State \textbf{Output:} Clustering $\mathcal{C}$ of $V$, Coloring function $\Phi:\mathcal{C}\rightarrow L$.

    \State 
    \State Initialize $\mathcal{C}=\emptyset,\,S_c=\emptyset$ for all $c\in L$.
    \For{$u\in V$}
        \If{$\exists c\in L$ s.t.\ $x_u^c<\frac{1}{2}$,}
            \State Update $S_c\leftarrow S_c\cup\{u\}$.
        \Else
            \State Update $\mathcal{C}\leftarrow \mathcal{C} \cup\{\{u\}\}$.
            \State Assign $\Phi(\{u\})$ as an arbitrary color.
        \EndIf
    \EndFor
    \For{$c\in L$}
        \State $G_c=(S_c,\,E_c = E_c^+\uplus E_c^- \uplus E_c^\circ )$, where $E_c= E\cap\binom{S_c}{2}$,
        \State and $E^+\uplus E^-\uplus E^\circ$ is defined as a partition by color $c,\,\gamma,L\backslash\{c\}$ respectively.
        \State Set $\mathcal{C}_c=\textsc{LP-Pivot}(G_c,x^c|_{E_c})$.
        \State Update $\mathcal{C}\leftarrow \mathcal{C}\cup \mathcal{C}_c$.
        \State Assign $\Phi(C)=c$ for all $C\in\mathcal{C}_c$.
    \EndFor
    \State \Return $\mathcal{C},\,\Phi$.
    \end{algorithmic}
\end{algorithm}

\section{Rounding Functions}
\label{sec:round}

The effectiveness of the \textsc{LP-Pivot} and \textsc{LP-CCC} algorithms critically depends on the choice of rounding functions used in the clustering process.
Rounding functions $f^+,\,f^-,\,f^\circ:[0,1]\rightarrow [0,1]$ convert the LP value $x_{uv}$ to the non-selection probability $p_{uv}$~\cite{chawla2015near}. The sign of the edge $uv$—either `$+$', `$-$', or `$\circ$'—determines which rounding function is applied. The sign `$\circ$' indicates that the edge does not belong to $E$.
The following natural conditions are imposed on any rounding function $f$:
\begin{align}
    &f(0)=0,\,f(1)=1;\label{cond:round_bd}\\
    &x<y\Rightarrow f(x)\leq f(y).\label{cond:round_nondec}
\end{align}
Condition~\ref{cond:round_bd} is not only intuitive but also necessary in certain cases, such as ensuring $f^+(0)=0$ and $f^-(1)=1$. Other constraints are not required in the proofs of Theorems~\ref{thm:wCC_lb} and \ref{thm:CCC_lb}, which thus provide lower bounds on the approximation factors for \emph{`general'} rounding functions. 

\begin{lemma}\label{lem:lp-pivot}
The \textsc{LP-Pivot} algorithm achieves a constant-factor approximation in expectation only if $f^+(0)=0$ and $f^-(1)=1$. 
\end{lemma}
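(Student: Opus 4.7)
My plan is to prove the contrapositive: if either $f^+(0) > 0$ or $f^-(1) < 1$, then there is an instance on which \textsc{LP-Pivot} has strictly positive expected cost while both the LP optimum and the integral optimum are zero, which rules out any constant-factor approximation. In both cases the witnessing instance will be a single edge on two vertices, where the triangle inequalities are vacuous and the offending extreme LP value ($0$ or $1$) is genuinely attained at an optimal fractional solution.

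First, suppose $f^+(0) > 0$. Consider $G = (\{u,v\}, \{uv\})$ with $uv \in E^+$ (and, for the pseudometric-weighted variant, set $w_{uv} = 1$, which is trivially a pseudometric). The LP assignment $x_{uv} = 0$ is feasible with objective value $0$, and the integral optimum is also $0$ by placing $u$ and $v$ in a single cluster. However, \textsc{LP-Pivot} selects one of the two vertices as pivot and then separates them with probability $p_{uv} = f^+(0) > 0$, incurring expected cost $f^+(0) > 0$, while $\text{OPT} = 0$. Symmetrically, if $f^-(1) < 1$, take the same two-vertex graph with $uv \in E^-$: the LP solution $x_{uv} = 1$ is feasible with cost $0$, and the integral optimum (placing $u$ and $v$ in separate clusters) is also $0$, but the algorithm merges $u$ and $v$ with probability $1 - f^-(1) > 0$, giving expected cost $1 - f^-(1) > 0$. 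In either case, no finite $\alpha$ satisfies $\text{ALG} \le \alpha \cdot \text{OPT}$.

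There is essentially no obstacle to overcome here; the lemma is a boundary-value observation saying that at the LP extremes where the fractional cost of a labeled edge vanishes, the corresponding rounding probability must agree with the integral cost, or else a purely additive gap immediately destroys the approximation ratio. The only step that benefits from care is verifying that the chosen fractional value is genuinely feasible and optimal for the LP, which is immediate on two-vertex instances since the triangle inequality constraints in \eqref{eq:LP_CC} and \eqref{eq:LP_wCC} are automatically satisfied.
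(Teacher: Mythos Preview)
Your proof is correct and follows essentially the same approach as the paper: exhibit an instance with optimum (LP and integral) equal to zero on which the algorithm has strictly positive expected cost, forcing the ratio to be unbounded. The only cosmetic difference is that you use a minimal two-vertex instance while the paper phrases it for a general complete graph with all edges of one sign; your version is, if anything, slightly cleaner since the triangle inequalities are vacuously satisfied and the expected cost is computed explicitly.
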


\begin{proof}
    We prove it by contradiction on some graph instances.
    \\
    \\
    \textbf{Case 1} $f^+(0)=0.$ Consider $G=(V,E=\binom{V}{2}=E\uplus \emptyset \uplus \emptyset)$.
    The optimal clustering is $\mathcal{C}^*=\{V\}$, satisfying $\text{obj}(\mathcal{C}^*)=0$, and the optimal LP solution is $x^*\equiv 0$.
    
    Suppose $f^+(0)>0$. Then $\Pr[\textsc{LP-Pivot}(G,0)\neq \mathcal{C}^*]>0$.
    Since $\text{obj}(\mathcal{C})>0$ if and only if $\mathcal{C}\neq\mathcal{C}^*$, this leads to a contradiction with the assumption of the expected constant factor approximation.
\\
\\
    \textbf{Case 2} $f^-(1)=1.$
    Consider $G=(V,E=\binom{V}{2}=\emptyset\uplus E \uplus \emptyset)$.
    The optimal clustering is $\mathcal{C}^*=\{\{v\}:v\in V\}$.
    The following arguments are similar to Case 1.
\end{proof}

Different variants of the CC problem may use different rounding functions. In this paper, we provide rounding functions for both the pseudometric-weighted CC and CCC problems.

\subsection{Pseudometric-weighted Correlation Clustering}

We propose the following  rounding functions that yield a tight approximation factor:
\begin{equation}
\label{eq:round_wCC}
    f^+(x)=f^-(x)=\begin{cases}
        0,&x<0.4;\\
        \frac{5}{3}x,&0.4\leq x<0.6;\\
        1,&x\geq 0.6.
    \end{cases}
\end{equation}


With these functions, the algorithm achieves an expected approximation factor of $10/3$. Moreover, no other rounding function can improve this factor, as shown in Section~\ref{subsec:analysis_wCC}.

\subsection{Chromatic Correlation Clustering}

We further consider the rounding functions $f^+,\,f^-$ from Chawla et al.~\cite{chawla2015near}, which yield a $2.06$-approximation for classical CC, and introduce a new function $f^\circ$ to handle $\circ$-edges for CCC:
\begin{equation}
\label{eq:round_CCC_pm}
    f^+(x)=\begin{cases}
        0,&x<0.19;\\
        \left(\frac{x-0.19}{0.5095-0.19}\right)^2,&0.19\leq x< 0.5095;\\
        1,&x\geq 0.5095,
    \end{cases}
    \quad
    f^-(x)=x,
\end{equation}
\begin{equation}
\label{eq:round_CCC_0}
    f^\circ(x)=\begin{cases}
        1.7x,&x<0.5;\\
        0.3x+0.7,&x\geq 0.5.
    \end{cases}
\end{equation}

This rather intricate function was computed to not intersect with analytic bounds that violate an approximation factor of $\alpha=2.15$, as illustrated in Figure~\ref{fig:round_CCC_region}.

\begin{figure}
    \centering
    \includegraphics[height=0.40\linewidth]{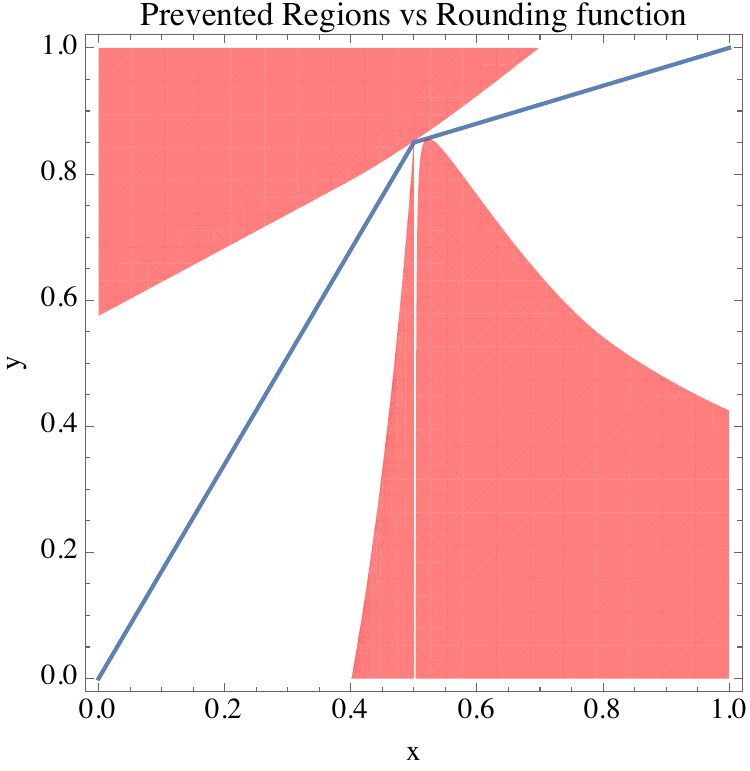}
   \caption{The region where $f^\circ$ violates the $\alpha = 2.15$-approximation for CCC using the proposed $f^\circ$ defined in~\eqref{eq:round_CCC_0}.}

    \label{fig:round_CCC_region}
\end{figure}

\subsection{Comparison with Prior Work}
\textbf{Pseudometric-weighted CC:} The \textsc{LP-UMVD-Pivot} algorithm, recently proposed by Charikar and Gao~\cite{DBLP:conf/soda/CharikarG24} for the Ultrametric Violation Distance (UMVD) problem, follows a pivoting-based rounding strategy applied to an LP relaxation. When the number of distinct pairwise distances between elements, denoted by \( L \), is equal to 2, their algorithm can be viewed as a special case of our \textsc{LP-Pivot} framework. In this setting, the two distances \( d_1 \) and \( d_2 \) correspond to the `$-$' and `$+$' labels, respectively, used in our rounding procedure.
 The rounding functions used are:
\[
f^+(x)=f^-(x)=\begin{cases}
    0,&x<\alpha;\\
    \frac{\max\{x-\alpha\beta,\,0\}}{1-\alpha\beta},&\alpha\leq x\leq 1-\alpha;\\
    1,&x>1-\alpha,
\end{cases}
\qquad
f^\circ(x)=\begin{cases}
    0,&x<\alpha\beta;\\
    x,&\alpha\beta \leq x\leq 1-\alpha\beta;\\
    1,&x>1-\alpha\beta.
\end{cases}
\]

Here, $\alpha$ and $\beta$ are fixed algorithmic parameters. For the pseudometric-weighted CC problem, they are set to $\alpha=\frac{1}{3}$ and $\beta=0$, and $f^\circ$ is unused.

With triple-based analysis, this choice yields an approximation factor of $6$, as shown in the subsection~\ref{sec:wCC_old}.

\textbf{Chromatic CC:} The LP-based pivoting algorithm by Xiu et al.~\cite{XiuHTCH22} uses LP values directly as probabilities, corresponding to the following rounding functions:
\[
f^+(x)=f^-(x)=f^\circ(x)=x.
\]
This setting is known to achieve an approximation factor of $2.5$.

\subsection{Proof for $6$-approximation on Pseudometric-weighted CC with \textsc{LP-UMVD-Pivot}}
\label{sec:wCC_old}
With the parameters $\alpha=\frac{1}{3}$ and $\beta=0$, the corresponding rounding functions are as follows:
\begin{equation}
\label{eq:round_wCC_old}
    f^+(x)=f^-(x)=\begin{cases}
    0,&x<\frac{1}{3};\\
    x,&\frac{1}{3} \leq x\leq \frac{2}{3};\\
    1,&x>\frac{2}{3}.
\end{cases}
\end{equation}

Since the formulas for $e.cost$ and $e.lp$ in the pseudometric-weighted CC are compatible with those in the classical CC, and weights $w$ affect $\mathcal{C}$ linearly, the following lemma presented by Chawla et al.~\cite{chawla2015near} remains applicable.
\begin{lemma}[Lemma 5 of \cite{chawla2015near}]\label{lem:vd_bd}
    Suppose $f^+$ is a monotonically non-decreasing piecewise convex function; $f^-$ is a monotonically non-decreasing piecewise concave function. Then $\mathcal{C}\geq 0$ for all possible configurations if it holds whenever:
    \begin{enumerate}
        \item the triangle inequality is tight for $(x_{uv},x_{vw},x_{wu})$; or
        \item each value of $x_e\:(e\in\{uv,vw,wu\})$ belongs to the endpoint of the domain for some piece of the rounding function $f^s$, where $s$ is the sign of $e$.
    \end{enumerate}
\end{lemma}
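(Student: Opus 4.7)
My plan is to reduce the minimization of $\mathcal{C}$ over the full feasible region to minimization over the two restricted families (i) and (ii) by a coordinate-wise push-to-boundary argument. The key structural fact is that, after reparametrizing by $p_e := f^{s(e)}(x_e)$, the function $\mathcal{C}$ becomes concave in each $p_e$ separately on any single rounding piece.

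Concretely, I would first partition $[0,1]^3$ into subregions according to the pieces of $f^+$ and $f^-$, so that within any subregion $f^{s(e)}$ is monotone non-decreasing and either convex (when $s(e)=+$) or concave (when $s(e)=-$). Fix such a subregion and freeze all variables except $p_{uv}$ (equivalently $x_{uv}$). The terms $e.cost_u(v,w)$ and $e.cost_v(w,u)$ are affine in $p_{uv}$ by inspection, and
\[
e.lp_u(v,w) = (1-p_{uv}p_{wu})\,g(x_{vw}), \qquad e.lp_v(w,u) = (1-p_{uv}p_{vw})\,g(x_{wu})
\]
are likewise affine in $p_{uv}$, where $g(x) = x$ for $+$ edges and $g(x) = 1-x$ for $-$ edges. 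The only remaining term is
\[
\alpha w_{uv}\,e.lp_w(u,v) = \alpha w_{uv}(1 - p_{uw}p_{vw})\,g(x_{uv}),
\]
whose coefficient $\alpha w_{uv}(1-p_{uw}p_{vw})$ is a nonnegative constant. It therefore suffices to check that $g(x_{uv})$ is concave as a function of $p_{uv}$. Since the inverse of a convex increasing function is concave and the inverse of a concave increasing function is convex, $x_{uv}$ is concave in $p_{uv}$ when $s(uv)=+$ (so $g(x_{uv})=x_{uv}$ is concave), while $x_{uv}$ is convex in $p_{uv}$ when $s(uv)=-$ (so $g(x_{uv})=1-x_{uv}$ is again concave). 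Hence $\mathcal{C}$ is concave in $p_{uv}$.

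A concave function on a closed interval attains its minimum at an endpoint, so the minimum of $\mathcal{C}$ in the $x_{uv}$ direction lies at an endpoint of the allowed $x_{uv}$ interval; every such endpoint is either a piece endpoint of $f^{s(uv)}$ (possibly $0$ or $1$) or a value at which one of the triangle inequalities becomes tight. Iterating the same argument on $x_{vw}$ and then $x_{wu}$, the final configuration is either one where some triangle inequality became tight at some step (case (i)), or one where all three coordinates are at piece endpoints simultaneously (case (ii)). The minor technicality of a constant piece of $f^{s(e)}$---where $p_e$ is not a free parameter---is handled directly: $\mathcal{C}$ is then affine in $x_e$, and its min on the piece is at an endpoint. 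The main bookkeeping obstacle I anticipate is ensuring that pushing $x_{vw}$ to an endpoint does not spoil the piece-endpoint status established for $x_{uv}$ earlier; this holds because the ``$x_{uv}$ at a piece endpoint'' condition is one-dimensional and independent of the values of $x_{vw}$ and $x_{wu}$.
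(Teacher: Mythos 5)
Your argument is correct: the reparametrization $p_e=f^{s(e)}(x_e)$ makes $\mathcal{C}$ concave in each $p_e$ separately (affine terms plus a nonnegative multiple of the concave composition $g\circ (f^{s(e)})^{-1}$), so coordinate-wise pushing to interval endpoints lands either on a tight triangle inequality or on piece endpoints, and your handling of flat pieces and of the order-of-pushing bookkeeping is sound. The paper itself does not prove this lemma but imports it as Lemma~5 of Chawla et al., and your proof is essentially the same concavity/push-to-boundary argument used there, so there is nothing genuinely different to compare.
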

Therefore, it suffices to show $6\cdot LP-ALG \geq 0$ in either of the two cases outlined above.

We introduce another useful lemma:
\begin{lemma}
\label{lem:xyz}
    If $x_{uv},x_{vw},x_{wu}\in \left[\frac{1}{\alpha},1-\frac{1}{\alpha}\right]$, then $\alpha\cdot LP-ALG\geq 0$.
\end{lemma}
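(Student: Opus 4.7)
The plan is to decouple the triple inequality $\alpha\cdot LP(uvw)\geq ALG(uvw)$ into three independent edge-wise inequalities of the form $\alpha\cdot e.lp_w(uv)\geq e.cost_w(uv)$, and recombine via linearity. Since both $LP(uvw)$ and $ALG(uvw)$ are nonnegative linear combinations of the three $e.lp$ and $e.cost$ terms with coefficients $w_{uv},w_{vw},w_{wu}\geq 0$, the edge-wise inequalities suffice; in particular, the pseudometric constraint on the weights is not required here, only their nonnegativity.

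The first step is to establish a sign-agnostic upper bound on the cost: for every pivot $w$ and every edge $uv$,
\[e.cost_w(uv)\leq 1-p_{uw}p_{vw}.\]
For $uv\in E^+$ this is the rewriting $p_{uw}(1-p_{vw})+(1-p_{uw})p_{vw}=1-p_{uw}p_{vw}-(1-p_{uw})(1-p_{vw})$, and for $uv\in E^-$ it is $(1-p_{uw})(1-p_{vw})=1-p_{uw}p_{vw}-(p_{uw}+p_{vw}-2p_{uw}p_{vw})$; in each case the subtracted term is nonnegative.

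The second step is a matching lower bound on the LP charge. Writing $e.lp_w(uv)=(1-p_{uw}p_{vw})\cdot z_{uv}$ with $z_{uv}=x_{uv}$ when $uv\in E^+$ and $z_{uv}=1-x_{uv}$ when $uv\in E^-$, the hypothesis $x_{uv}\in[1/\alpha,\,1-1/\alpha]$ forces $z_{uv}\geq 1/\alpha$ in either case. Combining the two bounds gives $\alpha\cdot e.lp_w(uv)\geq 1-p_{uw}p_{vw}\geq e.cost_w(uv)$. Applying this estimate to each of the three ordered pairs $(uv,w),(vw,u),(wu,v)$ and weighting by $w_{uv},w_{vw},w_{wu}$ respectively yields $\alpha\cdot LP(uvw)\geq ALG(uvw)$, as required.

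\textbf{Main obstacle.} The argument is essentially mechanical; the only subtle point is unifying the two edge-sign cases, which is precisely what the uniform bound $e.cost_w(uv)\leq 1-p_{uw}p_{vw}$ accomplishes. Notably, neither the specific piecewise form of the rounding function in~\eqref{eq:round_wCC_old} nor the triangle inequality on the weights enters, so the lemma serves as a clean drop-in whenever all three triangle edges have LP values in the moderate range $[1/\alpha,\,1-1/\alpha]$; the hard work of the $6$-approximation is then confined, via Lemma~\ref{lem:vd_bd}, to the boundary configurations where some $x_e$ lies at $0$, $1/3$, $2/3$, or $1$.
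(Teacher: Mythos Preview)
Your argument is correct and is essentially the same as the paper's: both proofs decouple the triple inequality into the three edge-wise inequalities $\alpha\cdot e.lp_w(uv)\geq e.cost_w(uv)$, using the sign-agnostic upper bound $e.cost_w(uv)\leq 1-p_{uw}p_{vw}$ together with the lower bound $e.lp_w(uv)\geq (1-p_{uw}p_{vw})\cdot\frac{1}{\alpha}$ coming from $\min\{x_{uv},1-x_{uv}\}\geq 1/\alpha$. Your write-up is somewhat more explicit about the two sign cases and about why only nonnegativity of the weights (not the pseudometric constraint) is needed, but the underlying proof is the same.
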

\begin{proof}
    \begin{align*}
        \alpha\cdot e.lp_w-e.cost_w&\geq \alpha(1-p_{vw}p_{wu})\min\{x,1-x\}\\
        &\quad -\max\{p_{vw}(1-p_{wu})+(1-p_{vw})p_{wu},(1-p_{vw})(1-p_{wu})\}\\
        &\geq \alpha(1-p_{vw}p_{wu})\cdot\frac{1}{\alpha}-(1-p_{vw}p_{wu})= 0.
    \end{align*}
    Same inequality holds for other vertices, hence $\alpha\cdot LP-ALG\geq 0$.
\end{proof}

For notational simplicity, define $(x,y,z):=(x_{uv},x_{vw},x_{wu})$ and \emph{w.l.o.g.}\ $x\leq y\leq z$.

\subsubsection{Triangle inequality is tight.}
In this case, $z=x+y$.
As the rounding functions are defined piecewise, we can partition the entire space of possible configuration of $(x,y)$ into a finite number of regions (labeled I through VI in Figure~\ref{fig:wCC_old_region}) where the behaviors of $p_{uv},\,p_{vw},\,p_{wu}$ are well identified in each.
\begin{figure}
    \centering
    \includegraphics[width=0.4\linewidth]{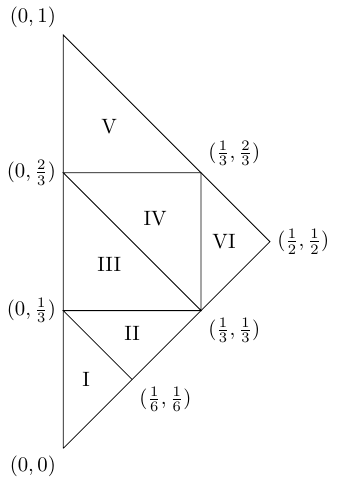}
    \caption{Configuration of regions induced from~\eqref{eq:round_wCC_old}.}
    \label{fig:wCC_old_region}
\end{figure}

Since $f^+=f^-$, the value of $p_e$ is independent of the sign of $e$, therefore the formula for both $e.cost_w$ and $e.lp_w$ are independent of the sign of the edges $vw$ and $wu$.
Therefore, once the region and the sign of the edge $uv$ are fixed, the formula for $e.cost_w$ and $e.lp_w$ are fully resolved as a function of $x,y,z$, as shown in Table~\ref{tab:wCC_old}.
\begin{table}
\caption{Formula for $e.cost$ and $e.lp$ for each configuration.}
\label{tab:wCC_old}

\begin{tabular}{llrrrrrr}
\toprule
 &  & \multicolumn{2}{c}{$x_{uv}=x$} & \multicolumn{2}{c}{$x_{vw}=y$} & \multicolumn{2}{c}{$x_{wu}=z=x+y$} \\
\cmidrule(r){3-4}\cmidrule(r){5-6}\cmidrule(r){7-8}
Region       & Sign     & $e.cost_w$        & $e.lp_w$       & $e.cost_u$        & $e.lp_u$       & $e.cost_v$          & $e.lp_v$         \\
       \midrule
I      & `$+$'  & $0$ &  $x$   & $0$ &  $y$  &  $0$  & $z$   \\
       & `$-$'  & $1$ & $1-x$ &   $1$  &  $1-y$  & $1$ & $1-z$  \\
II     & `$+$'  & $z$ & $x$   &  $z$   &$y$&$0$&$z$\\
       & `$-$'  &$1-z$  &$1-x$&   $1-z$ &$1-y$&$1$&$1-z$\\
III    & `$+$'  &$y+z-2yz$ &  $(1-yz)x$  &  $z$ &  $y$&$y$&  $z$\\
       & `$-$'  &$(1-y)(1-z)$  &  $(1-yz)(1-x)$ & $1-z$&$1-y$&      $1-y$&  $1-z$\\
IV      & `$+$'  &     $1-y$&  $(1-y)x$&  $1$&   $y$&  $y$&  $z$\\
       & `$-$'  &$0$&  $(1-y)(1-x)$&    $0$&  $1-y$&  $1-y$&$1-z$\\
V     & `$+$'  &$0$& $0$&$1$&$y$&$1$& $z$\\
       & `$-$'  &$0$& $0$&$0$&$1-y$&   $0$&  $1-z$\\ 
VI     & `$+$'  &$1-y$&  $(1-y)x$& $1-x$&$(1-x)y$& $x+y-2xy$&    $(1-xy)z$\\
       & `$-$'  &$0$&   $(1-y)(1-x)$&   $0$&  $(1-x)(1-y)$&    $(1-x)(1-y)$&    $(1-xy)(1-z)$\\
       \bottomrule
\end{tabular}
\end{table}
By the implication of Lemma~\ref{lem:pseudo}, it suffices to verify all possible configurations of the following procedure:
\begin{enumerate}
    \item \textbf{Select any region}: This results in $12$ corresponding cells, with $6$ possible cases.
    \item \textbf{Select any $2$ columns}: This results in $8$ corresponding cells, with $3$ possible cases.
    \item \textbf{For each column, select the sign}: This results in $4$ corresponding cells, with $4$ possible cases.
    \item \textbf{The following must hold}: $\alpha\cdot(e.lp_a+e.lp_b)-(e.cost_a+e.cost_b)\geq 0$, where $a,b$ are the vertices corresponding to the selected columns.
\end{enumerate}

\subsubsection*{Region I, II, V}
All formulas are affine; therefore, checking $\alpha\cdot(e.lp_a+e.lp_b)-(e.cost_a+e.cost_b)\geq 0$ at the extremal points is sufficient. Table~\ref{tab:wCC_old_extremal} shows extremal points of the regions with the value of $6\cdot e.cost-e.lp$ for each configuration. In each region with the point, adding any two values from different columns results in a nonnegative value, validating the nonnegativity throughout the region with sign configuration.

\begin{table}[]
\caption{Value of $6\cdot e.lp-e.cost$ for each extremal point of the region.}
    \label{tab:wCC_old_extremal}
    \centering
\begin{tabular}{lllrrr}
\toprule
Region&$(x,y)$&Sign& $6\cdot e.lp_w-e.cost_w$& $6\cdot e.lp_u-e.cost_u$&$6\cdot e.lp_v-e.cost_v$\\
\midrule
I&$(0,0)$&`$+$'&$0$&$0$&$0$\\
&&`$-$'&$5$&$5$&$5$\\
&$(0,\frac{1}{3})$&`$+$'&$0$&$2$&$2$\\
&&`$-$'&$5$&$3$&$3$\\
&$(\frac{1}{6},\frac{1}{6})$&`$+$'&$1$&$1$&$2$\\
&&`$-$'&$4$&$4$&$3$\\
II&$(0,\frac{1}{3})$&`$+$'&$-1/3$&$5/3$&$2$\\
&&`$-$'&$16/3$&$10/3$&$3$\\
&$(\frac{1}{6},\frac{1}{6})$&`$+$'&$2/3$&$2/3$&$2$\\
&&`$-$'&$13/3$&$1/3$&$3$\\
&$(\frac{1}{3},\frac{1}{3})$&`$+$'&$4/3$&$4/3$&$4$\\
&&`$-$'&$11/3$&$11/3$&$1$\\
V&$(0,\frac{2}{3})$&`$+$'&$0$&$3$&$3$\\
&&`$-$'&$0$&$2$&$2$\\
&$(0,1)$&`$+$'&$0$&$5$&$5$\\
&&`$-$'&$0$&$0$&$0$\\
&$(\frac{1}{3},\frac{2}{3})$&`$+$'&$0$&$3$&$5$\\
&&`$-$'&$0$&$2$&$0$\\
\bottomrule
\end{tabular}
\end{table}
\subsubsection*{Region III}
\label{subsubsec:old_III}
We can also further optimize procedures 3 and 4: Compute $\alpha\cdot e.lp_a-e.cost_a$ for each region, sign, and vertex configuration.
For a given region and vertex, if one sign configuration dominates another, choose the smaller one.

For example, consider $6\cdot e.lp_w-e.cost_w$ in Table~\ref{tab:wCC_old}.
Since $x\leq \frac{1}{2}$, $(1-yz)(1-x)\geq (1-yz)x$.
For $e.cost_w$, $y+z-2yz\geq (1-y)(1-z)$ since
\[
    y+z-2yz-(1-y)(1-z)=\frac{1-(2-3y)(2-3z)}{3}\geq 0
\]
since $y,z\in [1/3,2/3]$ within the region.
Therefore, $6\cdot e.lp_w-e.cost_w$ is smaller with the sign `$+$'.

Now, consider the minimum value possible for the formula above within the region.
Replacing $x=z-y$,
\[6\cdot e.lp_w-e.cost_w=-6yz^2+6y^2z+2yz-7y+5z.\]
Fixing $y$, the formula is concave of $z$; therefore, it is minimized in $z=y$ or $z=2/3$, resulting in $2y^2-2y$ and $4y^2-\frac{25}{3}y+\frac{10}{3}$ each.
Within the range $y\in [1/3,2/3]$, the minimum value of each is $-1/2$ in $y=1/2$ and $-4/9$ in $y=2/3$.
Thus, $6\cdot e.lp_w-e.cost_w\geq -1/2$.

Finally, consider the minimum value possible for $6\cdot e.lp_u-e.cost_u$ and $6\cdot e.lp_v-e.cost_v$. Again, these are affine; hence checking on extremal points is sufficient.
It turns out that $6\cdot e.lp_u-e.cost_u\geq 4/3$, minimized at $(y,z)=(1/3,2/3)$ with `$+$' sign; $6\cdot e.lp_v-e.cost_v\geq 4/3$, minimized at $(y,z)=(1/3,2/3)$ with `$-$' sign.

Therefore, any objective value is greater than or equal to $4/3-1/2=5/6\geq0$ in the region.

\subsubsection*{Region IV}
\label{subsubsec:old_IV}
We can utilize the optimization method explained in \ref{subsubsec:old_III} again: Since $(1-y)x\leq (1-y)(1-x)$ and $1-y\geq 0$, the formula with the sign `$+$' is smaller for $6\cdot e.lp_w-e.cost_w$, which is $(6x-1)(1-y)$. This has a minimum value of $-1/3$ in $(x,y)=(0,2/3)$.

Also, since $6z-y=6x+5y\geq 10/3> 5/2$ in this region, the formula with the sign `$-$' is smaller for $6\cdot e.lp_v-e.cost_v$, which is $5-6x-5y$. This has a minimum value of $-1/3$ in $(x,y)=(1/3,2/3)$.

The formula for $6\cdot e.lp_u-e.cost_u$ is $6y-1$ or $6(1-y)$. Each of them has a minimum value $1$ in $(x,y)=(1/3,1/3)$, $2$ in $(x,y)=(0,2/3)$, respectively.
Thus, $(6\cdot e.lp_w-e.cost_w)+(6\cdot e.lp_u-e.cost_u)$ and $(6\cdot e.lp_u-e.cost_u)+(6\cdot e.lp_v-e.cost_v)$ are at least $-1/3+1=2/3\geq 0$.

Finally,
\begin{align*}(6\cdot e.lp_w-e.cost_w)+(6\cdot e.lp_v-e.cost_v)&\geq (6x-1)(1-y)+(5-6x-5y)\\
&=-6xy-4y+4\geq -6y+4\geq 0,
\end{align*}
Making the approximation factor $\alpha=6$ tight at $(x,y)=(1/3,2/3)$ as well as the objective value nonnegative.

\subsubsection*{Region VI}
\label{subsubsec:old_VI}
As in IV, the formula with sign `$+$' is smaller for $6\cdot e.lp_w-e.cost_w$, which is $(6x-1)(1-y)$. This has a minimum value of $1/3$ in $(x,y)=(1/3,2/3)$.

The formula for $6\cdot e.lp_u-e.cost_u$ is $(1-x)(6y-1)$ or $6(1-x)(1-y)$.
Each of them has a minimum value $2/3$ in $(x,y)=(1/3,1/3)$, $4/3$ in $(x,y)=(1/3,2/3)$, respectively.

For $6\cdot e.lp_v-e.cost_v$, consider the difference between those of `$+$' sign and `$-$' sign.
\begin{align*}
    6(1-xy)(z-(1-z))-(x+y-2xy-(1-x)(1-y))
    &\geq 2(1-xy)-(-3xy+2x+2y-1)\\
    &=xy-2x-2y+3\\
    &=(2-x)(2-y)-1\geq \frac{11}{9}>0,
\end{align*}
since $z\geq 2/3$.
Therefore, the formula with sign `$-$' is smaller, which is $6(1-xy)(1-z)-(1-x)(1-y)$.

Define $p=xy$, then
\[6(1-xy)(1-z)-(1-x)(1-y)=6pz-5z-7p+5,\]
with region $z\in [2/3,1],\,p\in\left[\frac{z}{3}-\frac{1}{9},\left(\frac{z}{2}\right)^2\right]$.
Fixing the value of $z$, it is a decreasing function with respect to $p$; fixing the value of $p$, it is a decreasing function with respect to $z$. Therefore, the function is minimized at $z=1,\,p=\frac{1}{4}$ with the value $-\frac{1}{4}$.

Therefore, any objective value is greater than or equal to $1/3-1/4=1/12\geq 0$ in the region.

Overall, $6\cdot LP - ALG\geq 0$ when the triangle inequality is tight.

\subsubsection{All coordinates belong to endpoint of segments.}
There are total $4$ possible LP values: $0,\,1/3,\,2/3,\,1$.
Among all the possible $(x,y,z)$ tuples, there are only $7$ tuples whose triangle inequality is not tight.
Moreover, $3$ tuples $(x,y,z)=(\frac{1}{3},\frac{1}{3},\frac{1}{3}),\,(\frac{1}{3},\frac{2}{3},\frac{2}{3}),\,(\frac{2}{3},\frac{2}{3},\frac{2}{3})$ among them only consist of value $1/3$ or $2/3$, thus $3\cdot LP-ALG\geq0$ by Lemma~\ref{lem:xyz}.
Therefore, we only need to verify the remaining $3$ tuples.

Since left-side and right-side limits of the value of rounding function at such point might differ, we have to consider all possible limits of $6\cdot LP-ALG$ up to $2^3=8$ directions.
Table~\ref{tab:wCC_bd_old} shows remaining $3$ tuples and the directional limit of the value of $6\cdot e.cost - e.lp$ for each configuration.
Note that since $x\leq y\leq z$, only up to $4$ out of $8$ directions are required to verification.
\begin{table}[]
\caption{Value of $6\cdot e.lp-e.cost$ for each points in consideration of Case 2.}
    \label{tab:wCC_bd_old}
    \centering
\begin{tabular}{llrrr}
\toprule
$(x,y,z)$&Sign& $6\cdot e.lp_w-e.cost_w$& $6\cdot e.lp_u-e.cost_u$&$6\cdot e.lp_v-e.cost_v$\\
\midrule
$(\frac{2}{3}-\delta,\frac{2}{3}-\delta,1)$&`$+$'&$1$&$1$&$26/9$\\
&`$-$'&$2/3$&$2/3$&$-1/9$\\
$(\frac{2}{3}-\delta,\frac{2}{3}+\delta,1)$&`$+$'&$0$&$1$&$5/3$\\
&`$-$'&$0$&$2/3$&$0$\\
$(\frac{2}{3}+\delta,\frac{2}{3}+\delta,1)$&`$+$'&$0$&$0$&$0$\\
&`$-$'&$0$&$0$&$0$\\
$(\frac{1}{3}-\delta,1,1)$&`$+$'&$0$&$5$&$5$\\
&`$-$'&$0$&$0$&$0$\\
$(\frac{1}{3}+\delta,1,1)$&`$+$'&$0$&$10/3$&$10/3$\\
&`$-$'&$0$&$0$&$0$\\
$(\frac{2}{3}-\delta,1,1)$&`$+$'&$0$&$5/3$&$5/3$\\
&`$-$'&$0$&$0$&$0$\\
$(\frac{2}{3}+\delta,1,1)$&`$+$'&$0$&$0$&$0$\\
&`$-$'&$0$&$0$&$0$\\
$(1,1,1)$&`$+$'&$0$&$0$&$0$\\
&`$-$'&$0$&$0$&$0$\\
\bottomrule
\end{tabular}
\end{table}
As in argument on region I, II, V of Case 1, in each point with direction specified, adding any two values with different column results in a nonnegative value, validating the nonnegativity.
\\
\\
This completes our case analysis.

\section{Triple-based Analysis}
\label{sec:analysis}

To complete the analysis of the algorithm, it suffices to show that for every triple of vertices \( u, v, w \in V \), the expected cost incurred by the algorithm, denoted \( ALG(uvw) \), is at most a factor \( \alpha \) times the corresponding LP cost \( LP(uvw) \). That is,
\[
ALG(uvw) \leq \alpha \cdot LP(uvw).
\]

If the inequality holds for every triple, then the total expected cost of the algorithm is at most \( \alpha \cdot LP \). To show this, the analysis expresses the expected algorithmic cost and LP cost as averages over all possible pivot choices and vertex triples. Specifically, it defines:
\begin{itemize}
    \item \( e.cost_w(u, v) \): the expected cost of violating constraint \( (u, v) \), conditioned on pivot \( w \),
    \item \( e.lp_w(u, v) \): the expected LP \emph{charge} of edge \( (u, v) \), conditioned on pivot \( w \).
\end{itemize}

\subsection{Pseudometric-weighted Correlation Clustering}
\label{subsec:analysis_wCC}

In the CC setting, we use the function $\mathcal{C}$, as defined in~\cite{chawla2015near}, to measure the gap:
\[
\mathcal{C}(x_{uv},x_{vw},x_{wu},p_{uv},p_{vw},p_{wu}) = \alpha \cdot LP(uvw) - ALG(uvw),
\]
where
\begin{align*}
ALG(uvw)&=e.cost_w(uv)+e.cost_u(vw)+e.cost_v(wu),\\
LP(uvw)&= e.lp_w(uv)+ e.lp_u(vw)+ e.lp_v(wu),
\end{align*}
and
\begin{align*}
e.cost_w(u,v) &= \begin{cases}
    p_{uw}(1 - p_{vw}) + (1 - p_{uw})p_{vw}, & uv \in E^+; \\
    (1 - p_{uw})(1 - p_{vw}), & uv \in E^-,
\end{cases}\\
e.lp_w(u,v) &= \begin{cases}
    (1 - p_{uw}p_{vw})x_{uv}, & uv \in E^+; \\
    (1 - p_{uw}p_{vw})(1 - x_{uv}), & uv \in E^-.
\end{cases}
\end{align*}

In the weighted CC setting, edge weights further influence the value of $\mathcal{C}$:
\[
\mathcal{C}(x_{uv},x_{vw},x_{wu},p_{uv},p_{vw},p_{wu},w_{uv},w_{vw},w_{wu}) = \alpha \cdot LP(uvw) - ALG(uvw),
\]
with the definition for $e.cost$ and $e.lp$ remains the same; the classical CC corresponds to $(w_{uv},w_{vw},w_{wu}) = (1,1,1)$.

Under the pseudometric constraint on weights $w$, we can reduce the number of cases to consider in the analysis.

\begin{lemma}\label{lem:pseudo}
If $\alpha \cdot LP(uvw) - ALG(uvw) \geq 0$ holds for weight configurations $(w_{uv},w_{vw},w_{wu}) \in \{(1,1,0),\,(1,0,1),\,(0,1,1)\}$, then the inequality also holds for any configuration $(w_{uv},w_{vw},w_{wu})$ satisfying the triangle inequality.
\end{lemma}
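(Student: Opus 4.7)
My plan is to exploit the fact that $\mathcal{C}$ is a linear function of the weight vector $(w_{uv},w_{vw},w_{wu})$, combined with the classical fact that the cone of pseudometrics on three points is generated by the three cut metrics $(1,1,0)$, $(1,0,1)$, $(0,1,1)$. Once these two ingredients are in place, the lemma follows immediately: any pseudometric weight assignment decomposes as a nonnegative combination of the three corner cases, and a nonnegative combination of nonnegative values is nonnegative.

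First I would make the linearity precise. In the weighted setting, the cost of violating edge $uv$ and its LP charge are each scaled by $w_{uv}$, so
\begin{align*}
ALG(uvw) &= w_{uv}\,e.cost_w(u,v) + w_{vw}\,e.cost_u(v,w) + w_{wu}\,e.cost_v(w,u),\\
LP(uvw) &= w_{uv}\,e.lp_w(u,v) + w_{vw}\,e.lp_u(v,w) + w_{wu}\,e.lp_v(w,u),
\end{align*}
and the inner terms $e.cost_\bullet$, $e.lp_\bullet$ depend only on $x$ and $p$, not on the weights. Therefore
\[
\mathcal{C}(x,p,w) \;=\; \alpha\,LP(uvw) - ALG(uvw)
\]
is a linear functional of $(w_{uv},w_{vw},w_{wu}) \in \mathbb{R}_{\geq 0}^3$.

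Next I would write down the explicit cut-cone decomposition. Given weights satisfying the triangle inequality, set
\[
\lambda_1 = \tfrac{1}{2}(w_{uv}+w_{vw}-w_{wu}),\qquad \lambda_2 = \tfrac{1}{2}(w_{uv}+w_{wu}-w_{vw}),\qquad \lambda_3 = \tfrac{1}{2}(w_{vw}+w_{wu}-w_{uv}).
\]
Each $\lambda_i \geq 0$ is exactly the triangle inequality applied to one of the three orderings, and a direct substitution verifies
\[
(w_{uv},w_{vw},w_{wu}) \;=\; \lambda_1(1,1,0) + \lambda_2(1,0,1) + \lambda_3(0,1,1).
\]
Applying the linear functional $\mathcal{C}$ and using the hypothesis that $\mathcal{C} \geq 0$ on each of the three cut configurations yields $\mathcal{C}(x,p,w) \geq 0$, which is the claim.

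The step that requires care rather than cleverness is the first one: justifying that weights factor out linearly from both $ALG$ and $LP$, since the definitions of $e.cost_\bullet$ and $e.lp_\bullet$ displayed in the excerpt do not explicitly carry a $w$-factor. This is really a matter of reconciling the notation with the objective of \eqref{eq:LP_wCC}, after which the geometric argument above is routine. No additional case analysis or clever inequality is needed, which is precisely the point of the lemma: it reduces the continuum of pseudometric configurations to three extreme cases that the subsequent triple-based analysis must check.
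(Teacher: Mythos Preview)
Your proposal is correct and matches the paper's own proof essentially line for line: both observe that $\alpha\cdot LP(uvw)-ALG(uvw)$ is linear in $(w_{uv},w_{vw},w_{wu})$ and that the pseudometric cone on three points is generated by the three cut vectors $(1,1,0),(1,0,1),(0,1,1)$. Your version is slightly more explicit in writing out the coefficients $\lambda_i$, but the argument is the same.
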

\begin{proof}
    Let all $x_{uv},\,x_{vw},\,x_{wu},\,p_{uv},\,p_{vw},\,p_{wu}$ be fixed.
    $ALG(uvw)$ and $LP(uvw)$ can be written as
    \[ALG(uvw)=w_{uv}\cdot e.cost_w(uv)+w_{vw}\cdot e.cost_u(vw)+w_{wu}\cdot e.cost_v(wu)\]
    and
    \[LP(uvw)=w_{uv}\cdot e.lp_w(uv)+w_{vw}\cdot e.lp_u(vw)+w_{wu}\cdot e.lp_v(wu).\]

    Therefore, the function $\alpha LP(uvw) - ALG(uvw)$ is linear w.r.p.\ $(w_{uv},w_{vw},w_{wu})$.

    Since the set of $(w_{uv},w_{vw},w_{wu})$ that satisfies the triangle inequality forms a convex cone generated by $(1,1,0),\,(1,0,1),\,(0,1,1)$,
    the function value is nonnegative for all such $(w_{uv},w_{vw},w_{wu})$ if and only if the value is nonnegative for $(w_{uv},w_{vw},w_{wu})\in\{(1,1,0),\,(1,0,1),\,(0,1,1)\}$.
\end{proof}
This lemma implies that the algorithm achieves an approximation factor of $\alpha$ if all of the following inequalities are satisfied for every possible configuration on the triangle $uvw$:
\begin{align*}
    e.cost_w(uv) + e.cost_u(vw) &\leq \alpha \cdot (e.lp_w(uv) + e.lp_u(vw)), \\
    e.cost_w(uv) + e.cost_v(wu) &\leq \alpha \cdot (e.lp_w(uv) + e.lp_v(wu)), \\
    e.cost_u(vw) + e.cost_v(wu) &\leq \alpha \cdot (e.lp_u(vw) + e.lp_v(wu)).
\end{align*}

We obtain a lower bound on the approximation factor of \textsc{LP-Pivot} by verifying the feasibility of rounding functions that satisfy the above inequalities. To this end, we analyze several configurations of LP values and edge signs on triangle $uvw$.

In Theorems~\ref{thm:wCC_lb} and~\ref{thm:CCC_lb}, the notation `$(a,b,c)$ with $(s_1,s_2,s_3)$' denotes $(x_{uv},x_{vw},x_{wu}) = (a,b,c)$, where each edge sign is given by $s_1$, $s_2$, and $s_3$, respectively.

\begin{theorem}
\label{thm:wCC_lb}
The lower bound on the approximation factor of \textsc{LP-Pivot} in pseudometric-weighted correlation clustering is $10/3$.   
\end{theorem}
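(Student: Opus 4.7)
My plan is to prove the lower bound by exhibiting a small collection of triangle instances whose feasibility inequalities jointly rule out any rounding pair $(f^+, f^-)$ achieving an approximation factor below $10/3$. By Lemma~\ref{lem:pseudo}, each instance reduces to checking the pairwise inequality $e.cost_{w}(uv)+e.cost_{u}(vw) \leq \alpha\,(e.lp_{w}(uv)+e.lp_{u}(vw))$ (and its two cyclic rotations) under one of the three extremal weight profiles $(1,1,0), (1,0,1), (0,1,1)$.

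The principal instance family I would analyze is $(x_{uv}, x_{vw}, x_{wu}) = (a, a, 2a)$ with signs $(+,+,-)$ under weight profile $(1,1,0)$, for $a \in (0,1/2]$ (note that the triangle inequality $a+a \geq 2a$ is tight). Writing $r := f^{+}(a)$ and $q := f^{-}(2a)$, a direct computation along the lines of Section~\ref{subsec:analysis_wCC} yields the necessary condition
\[
\alpha \;\geq\; \frac{r + q - 2rq}{(1-rq)\,a}.
\]
At the critical value $a = 3/10$, the extreme assignment $(r,q) = (0,1)$ saturates this at $\alpha = 10/3$, witnessing that the rounding of~\eqref{eq:round_wCC} realizes the worst case, since it produces exactly $f^+(3/10)=0$ and $f^-(3/5)=1$.

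To extend the obstruction to all admissible rounding functions, I would introduce complementary instances that close off the other "escape routes" in the $(r,q)$-plane. A natural mirror is $(a, a, 2a)$ with signs $(-,-,+)$, which, by an analogous derivation, yields an inequality of the form $\alpha \geq (1-r')(1-q')/\bigl[(1-r'q')(1-a)\bigr]$ with $r' = f^{+}(2a)$ and $q' = f^{-}(a)$, and which becomes binding in the opposite regime. A third useful instance is $(1/2, 1/2, 1)$ with signs $(+,+,-)$ under weight $(1,0,1)$, giving the simpler one-variable constraint $\alpha \geq 4 - 2 f^{+}(1/2)$. A finite case analysis on the values $f^{+}(3/10)$ and $f^{-}(3/5)$ chosen by the algorithm designer then shows that, in every case, one of these inequalities is binding and forces $\alpha \geq 10/3$.

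The main obstacle is the last step: finding a minimal family of LP values and sign patterns whose constraints jointly cover every possible choice of $(f^+, f^-)$. The specific constant $10/3$ arises from LP values that simultaneously lie on the triangle-inequality boundary for $x$ and on the pseudometric boundary for the weights; a naive selection of instances gives only a weaker bound, whereas pushing the analysis towards non-tight LP configurations loses the sharp extremal behavior. Executing the case analysis cleanly, so that each algorithmic attempt to simultaneously push $f^+(3/10), f^-(3/5), f^+(1/2)$ away from the critical corners is blocked by a matching instance, is the delicate part of the argument.
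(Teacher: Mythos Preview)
Your three instance families are not sufficient to force $\alpha\ge 10/3$: the identity rounding $f^+(x)=f^-(x)=x$ satisfies all of them with $\alpha=3$. Indeed, for your family~1, $(a,a,2a)$ with signs $(+,+,-)$ and weight $(1,1,0)$, we get $r=a$, $q=2a$, so the bound is $\frac{3-4a}{1-2a^2}$, which is decreasing on $(0,1/2]$ with supremum $3$; for family~2, $(a,a,2a)$ with $(-,-,+)$, the bound is $\frac{1-2a}{1-2a^2}\le 1$; and for your single instance~3 you get $\alpha\ge 4-2f^+(1/2)=3$. So no amount of ``finite case analysis on $f^+(3/10)$ and $f^-(3/5)$'' can close the gap: the instances themselves are too weak, not the analysis. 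The observation that $(r,q)=(0,1)$ saturates family~1 at $a=3/10$ only shows the \emph{paper's} rounding function is tight there; it says nothing about an arbitrary choice of $(f^+,f^-)$.

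What the paper does differently is twofold. First, it uses all-\emph{plus} triangles $(0,x,x)$ and $(x,x,2x)$ with signs $(+,+,+)$ and weight $(1,1,0)$; these constrain $f^+$ against itself, giving first $f^+(x)\le\alpha x$ and then the bootstrap $f^+(2x)\le\alpha x$, i.e.\ $f^+(x)<\tfrac{\alpha}{2}x$ on $[0,2/\alpha)$. Your sign pattern $(+,+,-)$ only relates $f^+(a)$ to $f^-(2a)$, so the designer can decouple the two functions and escape. Second, the paper uses the \emph{entire} family $(x,1-x,1)$ with $(+,+,-)$ and weight $(1,0,1)$, not just $x=1/2$; this yields the implication ``$f^+(x)\le 2-\alpha x\Rightarrow f^+(1-x)=1$.'' Combining the two functional facts at $x=\tfrac{4}{3\alpha}$ forces $1-\tfrac{4}{3\alpha}\ge \tfrac{2}{\alpha}$, hence $\alpha\ge 10/3$. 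The argument is not a pointwise case split on a few values of $f^+,f^-$ but a contradiction between two functional bounds on $f^+$; your proposal is missing exactly the instances that produce those bounds.
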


\begin{proof}
Suppose that \textsc{LP-Pivot} achieves an expected $\alpha$-approximation.

We begin by considering the configuration $(x,1-x,1)$ with signs $(+,+,-)$ or $(+,-,-)$ and weights $(1,0,1)$. Then:
\begin{align}
    \alpha (1 - f^+(1 - x))x &\geq (2 - f^+(x))(1 - f^+(1 - x)), \\
    \alpha (1 - f^-(1 - x))x &\geq (2 - f^+(x))(1 - f^-(1 - x)),
\end{align}
which implies:
\begin{equation}
\label{eq:weight_lb_upper}
    f^+(x) \leq 2 - \alpha x \quad \implies \quad f^+(1 - x) = f^-(1 - x) = 1.
\end{equation}

Next, consider the configuration $(0,x,x)$ with signs $(+,+,+)$ and weights $(1,1,0)$:
\begin{equation}
\label{eq:weight_lb_lower1}
    \alpha x\geq 3f^+(x) - 2f^+(x)^2,
\end{equation}
which yields the bound $f^+(x) \leq \alpha x$. Therefore, for $x \in [0, 1/\alpha]$, we have
\[
f^+(x) \leq \alpha x \leq 2 - \alpha x, \quad f^+(1 - x) = 1.
\]

Now consider the configuration $(x,x,2x)$ with signs $(+,+,+)$ and weights $(1,1,0)$:
\begin{equation}
\label{eq:weight_lb_lower2_1}
    \alpha x (1 - f^+(x) f^+(2x)) \geq f^+(x) + f^+(2x) - 2f^+(x) f^+(2x).
\end{equation}

Rewriting the inequality, we obtain:
\begin{equation}
    (1 - (2 - \alpha x) f^+(x)) f^+(2x) \leq \alpha x - f^+(x).
\end{equation}

Observe that for $x \in [0,1/\alpha)$ where $f^+(x) < 1$, the term
\[
1 - (2 - \alpha x) f^+(x) \geq 1 - 2f^+(x) + f^+(x)^2 \geq 0
\]
is strictly positive. Hence:
\begin{equation}
\label{eq:weight_lb_lower2_2}
    f^+(2x) \leq \frac{\alpha x - f^+(x)}{1 - (2 - \alpha x) f^+(x)} 
    = \alpha x - \frac{(1 - \alpha x)^2 f^+(x)}{1 - (2 - \alpha x) f^+(x)} 
    \leq \alpha x,
\end{equation}
for $x \in [0, 1/\alpha)$.

Combining the above, we conclude:
\[
f^+(x) < \frac{\alpha}{2} x \quad \text{for } x \in [0, 2/\alpha).
\]

Set \( x = \frac{4}{3\alpha} \). Then from~\eqref{eq:weight_lb_upper} and~\eqref{eq:weight_lb_lower2_2}, we have:
\[
f^+\left(\frac{4}{3\alpha}\right) \leq \frac{2}{3} = 2 - \alpha x, \quad \text{so} \quad f^+\left(1 - \frac{4}{3\alpha}\right) = 1.
\]

Since $f^+(x) < 1$ for $x \in [0, 2/\alpha)$, this implies:
\[
\frac{2}{\alpha} \leq 1 - \frac{4}{3\alpha} \quad \Longrightarrow \quad \alpha \geq \frac{10}{3}.
\]

Hence, the approximation factor must be at least $10/3$.
\end{proof}

Conversely, there exist rounding functions $f^+,\,f^- $ making the approximation factor of \textsc{LP-Pivot} by $10/3$, providing that the lower bound above is tight.
\begin{theorem}
\label{thm:wCC_alg}
The \textsc{LP-Pivot} algorithm with the rounding function defined in equation~\ref{eq:round_wCC} yields a $10/3$-approximation algorithm for pseudometric-weighted CC.     The proof is provided in the Appendix.
\end{theorem}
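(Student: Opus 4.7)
The plan is to establish the triple inequality $ALG(uvw) \leq \tfrac{10}{3} \cdot LP(uvw)$ for every vertex triple $(u,v,w)$ and every sign assignment in $\{+,-\}$ on each edge, which by Lemma~\ref{lem:pseudo} reduces to checking only the three extreme weight vectors $(1,1,0)$, $(1,0,1)$, $(0,1,1)$. All three are equivalent up to relabeling of the triple, so it suffices to verify the single inequality
\[
e.cost_w(uv) + e.cost_u(vw) \;\leq\; \tfrac{10}{3}\bigl(e.lp_w(uv) + e.lp_u(vw)\bigr)
\]
for every choice of signs on $uv, vw$ and every LP profile $(x_{uv}, x_{vw}, x_{wu})$ satisfying the triangle inequality.

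Since $f^+ \equiv f^-$ in equation~\eqref{eq:round_wCC}, the rounding probabilities $p_{uv}, p_{vw}, p_{wu}$ depend only on the LP values, not on edge signs. The rounding function is piecewise with breakpoints at $x = 0.4$ and $x = 0.6$, partitioning $[0,1]$ into a \emph{low} regime ($p = 0$), a \emph{middle} regime ($p = \tfrac{5}{3} x \in [\tfrac{2}{3}, 1]$), and a \emph{high} regime ($p = 1$). I would enumerate the at most $3^3 = 27$ combined regimes on $(x_{uv}, x_{vw}, x_{wu})$, prune the infeasible ones using $x_{uv} + x_{vw} \geq x_{wu}$ and its rotations, and within each feasible regime verify the inequality for the three essentially distinct sign patterns on $(uv, vw)$: namely $(+,+)$, $(+,-)$, and $(-,-)$ (the pattern $(-,+)$ reduces to $(+,-)$ after swapping $u$ and $w$).

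In regimes where every probability is pinned to $0$ or $1$, the verification reduces to a finite discrete check directly from the definitions of $e.cost$ and $e.lp$. When one or more probabilities lie in the middle regime, substituting $p = \tfrac{5}{3} x$ turns the inequality into a polynomial constraint on the LP values. The main obstacle will be the regimes with two or three LP values in the middle band $[0.4, 0.6]$: here the inequality becomes a genuinely quadratic form in the $x$'s that must be shown nonnegative over a domain cut out by both the interval $[0.4,0.6]$ and the triangle inequality. I expect the factor $\tfrac{10}{3}$ to be witnessed tightly at configurations reminiscent of the lower-bound construction in Theorem~\ref{thm:wCC_lb}, for instance $(x_{uv}, x_{vw}, x_{wu}) = (\tfrac{2}{5}, \tfrac{2}{5}, \tfrac{4}{5})$ with signs $(+,+)$ or $(\tfrac{2}{5}, \tfrac{3}{5}, 1)$ with $(+,-)$, so matching the algebraic extrema against these witness points will both guide and cross-check the quadratic verification.
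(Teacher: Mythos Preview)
Your reduction via Lemma~\ref{lem:pseudo} and the observation that $f^+=f^-$ makes the probabilities sign-independent are both correct, and your symmetry argument that only one of the three two-edge inequalities needs checking (over all unordered LP profiles) is fine. Where you diverge from the paper is in how you handle the resulting three-variable verification.

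The paper does not enumerate $3^3$ solid regimes. Instead it invokes Lemma~\ref{lem:vd_bd} (Lemma~5 of~\cite{chawla2015near}): since $f^+=f^-$ is piecewise linear, it is simultaneously piecewise convex and piecewise concave, so it suffices to check $\mathcal{C}\geq 0$ only when (i) the triangle inequality is \emph{tight}, $z=x+y$, or (ii) every coordinate sits at a breakpoint $\{0,\tfrac25,\tfrac35,1\}$. Case~(i) collapses the problem to two variables and is dispatched by partitioning the $(x,y)$-triangle into seven regions (analogous to Appendix~\ref{sec:wCC_old}); case~(ii) is a finite table. Your direct $27$-regime plan forces you to verify inequalities over genuinely three-dimensional polytopes, and the expressions there are \emph{cubic}, not quadratic: a term like $(1-p_{vw}p_{wu})x_{uv}$ contributes $\tfrac{25}{9}x_{vw}x_{wu}x_{uv}$ when both neighboring probabilities are in the middle band. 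This is not fatal---Lemma~\ref{lem:xyz} already covers the all-middle box, and the mixed regimes can be ground out---but it is substantially more work than you indicate, and ``matching extrema against witness points'' is not by itself a verification strategy for a cubic over a polytope. Incidentally, the configuration where the $\tfrac{10}{3}$ factor is actually attained is $(0,\tfrac25,\tfrac25)$ with both signs $+$ (approached from the middle side of the breakpoint), not $(\tfrac25,\tfrac25,\tfrac45)$.
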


\subsection{Chromatic Correlation Clustering}
\label{subsec:analysis_CCC}

We analyze the performance of the \textsc{LP-CCC} algorithm. This algorithm begins by assigning each vertex to its majority color based on the LP solution, followed by a pivot-based clustering routine.

Due to the strict majority condition, any edge not included in $\biguplus E_c$ must have an LP value of at least $1/2$. Thus, the cost incurred by such edges is at most twice their LP contribution~\cite{XiuHTCH22}.

Within each color class $S_c$, corresponding to color $c$, we follow an analysis similar to that of Chawla et al.~\cite{chawla2015near}: edges of color $c$ are treated as positive edges ($E^+$), edges of the adversarial color $\gamma$ as negative edges ($E^-$), and all other edges as neutral ($E^\circ$).

For positive and negative edges, the definitions of $e.cost$ and $e.lp$ remain consistent with those in~\cite{chawla2015near}. The other three cases, particularly those involving neutral edges, require more careful treatment.

Consider a negative edge $uv \in E^-$: the LP value is 
\[
e.lp_w(u,v) = \sum_{c' \in L} (1 - x_{uv}^{c'}) \geq 1 - x_{uv}^c.
\]

For a neutral edge $uv \in E^\circ$, the expected cost arises from the event that $u$ and $v$ are not separated by $w$, i.e., at least one of them shares a cluster with $w$. The expected cost is thus given by the probability that $u$ and $v$ are not simultaneously separated from $w$.

The LP contribution in this case is the product of this probability with $x_{uv}^{\phi(uv)}$, where $\phi(uv) \neq c$ is the color of edge $uv$ in the input. While $x_{uv}^{\phi(uv)}$ is not tied to color $c$, we can still bound it below using $x_{uv}^c,\,x_{vw}^c,\,x_{wu}^c$ due to LP constraints~\cite{XiuHTCH22}:
\begin{align*}
x_{uv}^{\phi(uv)} &\geq \max\{x_u^{\phi(uv)}, x_v^{\phi(uv)}\} \tag{\ref{cond:CCC_evdom}} \\
&\geq \max\left\{\frac{1}{2}, 1 - x_u^c, 1 - x_v^c \right\} \tag{\ref{cond:CCC_chroma},\,\ref{cond:CCC_bd}} \\
&\geq \max\left\{\frac{1}{2}, 1 - x_{uv}^c, 1 - x_{vw}^c, 1 - x_{wu}^c \right\}. \tag{\ref{cond:CCC_evdom}}
\end{align*}

Summarizing the results, we express the expected cost and lower bound on the LP value for a fixed pivot $w$ as follows:
\begin{equation}
e.cost_w(u,v) = \begin{cases}
    p_{uw}(1 - p_{vw}) + (1 - p_{uw})p_{vw}, & uv \in E^+; \\
    (1 - p_{uw})(1 - p_{vw}), & uv \in E^-; \\
    1 - p_{uw}p_{vw}, & uv \in E^\circ;
\end{cases}
\end{equation}
\begin{equation}\label{eq:ccc_elp}
e.lp_w(u,v) \geq \begin{cases}
    (1 - p_{uw}p_{vw})x_{uv}^c, & uv \in E^+; \\
    (1 - p_{uw}p_{vw})(1 - x_{uv}^c), & uv \in E^-; \\
    (1 - p_{uw}p_{vw})\max\left\{\frac{1}{2}, 1 - x_{uv}^c, 1 - x_{vw}^c, 1 - x_{wu}^c \right\}, & uv \in E^\circ.
\end{cases}
\end{equation}

These formulations are central to the analysis. Since $\alpha \cdot LP-ALG$ is always at least the expression obtained from the LP lower bound, we can prove that this bound is nonnegative.

As in Section~\ref{subsec:analysis_wCC}, the algorithm achieves an $\alpha$-approximation if the following inequality holds for all triangles $uvw$:
\[
e.cost_w(uv)  + e.cost_u(vw)+ e.cost_v(wu) \leq \alpha \cdot \left( e.lp_w(uv) + e.lp_u(vw)  + e.lp_v(wu)\right).
\]

This inequality leads to the following result on the approximation guarantee for \textsc{LP-CCC}:

\begin{theorem}
\label{thm:CCC_lb}
The approximation factor of \textsc{LP-CCC} for CCC is bigger than $2.11$. 
\end{theorem}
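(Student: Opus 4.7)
The plan is to mimic the adversarial triple-by-triple construction used in Theorem~\ref{thm:wCC_lb}, but now with the three edge types $E^+$, $E^-$, $E^\circ$ of CCC and with the stronger LP lower bound of~\eqref{eq:ccc_elp} that arises from the majority condition $x_u^c \geq 1/2$ for every $u\in S_c$. Assume, for contradiction, that \textsc{LP-CCC} achieves an $\alpha$-approximation with $\alpha\leq 2.11$. Lemma~\ref{lem:lp-pivot} already pins down $f^+(0)=0$ and $f^-(1)=1$, so the task reduces to showing that no triple of rounding functions $(f^+,f^-,f^\circ)$ satisfying~\eqref{cond:round_bd} can keep the inequality $e.cost_w(uv)+e.cost_u(vw)+e.cost_v(wu)\leq\alpha\big(e.lp_w(uv)+e.lp_u(vw)+e.lp_v(wu)\big)$ valid across a handful of carefully chosen triangles.

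First, I would reuse the all-positive configurations from the pseudometric-weighted proof: evaluating the triple inequality on $(0,x,x)$ and on $(x,x,2x)$ with signs $(+,+,+)$ yields, by exactly the same manipulations as in~\eqref{eq:weight_lb_lower1}--\eqref{eq:weight_lb_lower2_2}, the bounds $f^+(x)\leq \alpha x$ and then $f^+(2x)\leq \alpha x$. This step uses only the $E^+$ branches of $e.cost$ and $e.lp$ (which coincide with the classical CC formulas), so the argument transplants essentially verbatim and gives $f^+(x)<\tfrac{\alpha}{2}x$ on $[0,2/\alpha)$. In parallel, configurations of the form $(x,1-x,1)$ with signs $(+,+,-)$ reproduce the implication $f^+(x)\leq 2-\alpha x\Rightarrow f^-(1-x)=1$, providing an upper bound on $f^-$ paralleling~\eqref{eq:weight_lb_upper}.

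The new work, and the place where the bound $2.11$ (rather than $10/3$) will come from, concerns the neutral edges. I would analyze triangles with sign patterns $(+,\circ,\circ)$, $(-,\circ,\circ)$ and $(\circ,\circ,\circ)$ whose LP values straddle $1/2$, since the LP lower bound in~\eqref{eq:ccc_elp} for a neutral edge switches behavior there via the $\max\{\tfrac{1}{2},1-x_{uv}^c,1-x_{vw}^c,1-x_{wu}^c\}$ term. On each configuration, the cost side contributes $1-p_{uw}p_{vw}$, which is comparatively large, while the LP side is weighted only by the $\max$ term; this forces $f^\circ$ to stay above certain pointwise lower bounds and simultaneously, via configurations where the neutral edge is adjacent to a $+$-edge near the $f^+=0$ regime, to respect an upper bound coupling $f^\circ(x)$ with $f^+(x)$ and $f^-(x)$. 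The goal is to derive, at a well-chosen abscissa $x^\star$ (presumably near the breakpoints $0.19$ or $0.5$ that already appear in~\eqref{eq:round_CCC_pm}--\eqref{eq:round_CCC_0}), a scalar inequality $g(\alpha,x^\star)\geq 0$ whose numerically smallest solution is strictly larger than $2.11$; combining two or three such constraints then contradicts the assumption $\alpha\leq 2.11$.

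The main obstacle I expect is the last step: unlike the weighted case, where three extremal configurations already collapse the system to a single scalar inequality whose root is the clean value $10/3$, the $\max$-split in~\eqref{eq:ccc_elp} together with the two-piece structure of $f^\circ$ means the infeasibility region must be located by comparing several adversarial triangles and solving the resulting nonlinear system numerically. The figure accompanying the rounding function $f^\circ$ in Section~\ref{sec:round} indicates that the authors effectively map this feasibility region; my proof plan is to extract from that region an explicit point $(x^\star,\alpha^\star)$ with $\alpha^\star>2.11$ and exhibit the triangle achieving it, thereby certifying the lower bound.
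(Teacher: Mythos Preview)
Your plan has a genuine gap in its first block. The inequalities~\eqref{eq:weight_lb_lower1}--\eqref{eq:weight_lb_lower2_2} that you propose to transplant all rely on the ability to set the edge \emph{weights} to $(1,1,0)$ or $(1,0,1)$, thereby zeroing out one edge of the triangle. In CCC there are no weights: every edge contributes with coefficient~$1$ to both $ALG$ and $LP$. Consequently the step ``$(x,x,2x)$ with $(+,+,+)$ gives $f^+(2x)\le\alpha x$'' fails --- in the unweighted triple inequality the third edge $wu$ contributes $2f^+(x)(1-f^+(x))$ to the cost and $2x(1-f^+(x)^2)$ to the LP side, and the manipulation leading to~\eqref{eq:weight_lb_lower2_2} no longer goes through. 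The same objection applies to your use of $(x,1-x,1)$ with weights $(1,0,1)$ to force $f^-(1-x)=1$: without the zero weight on the middle edge that implication is simply false. (The \emph{unweighted} configuration $(0,x,x)$ with $(+,+,+)$ does survive, but it gives $f^+(x)\le 1-\sqrt{1-\alpha x}$ rather than $f^+(x)\le\alpha x$; this is in fact the bound the paper uses.)

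The paper's proof is organized quite differently from your sketch. It singles out the value $f^\circ(\tfrac12)$ and squeezes it from both sides: the configuration $(\tfrac12,\tfrac12,1)$ with signs $(\circ,\circ,-)$ yields the lower bound $f^\circ(\tfrac12)\ge 3-\alpha$, while $(\tfrac14,\tfrac14,\tfrac12)$ with signs $(+,+,\circ)$ yields an upper bound on $f^\circ(\tfrac12)$ that is an explicit rational function of $f^+(\tfrac14)$ and $\alpha$. Feeding in the unweighted bound $f^+(\tfrac14)\le 1-\sqrt{1-\alpha/4}$ and maximizing over the admissible range of $f^+(\tfrac14)$ produces a clean one-variable inequality $h(t)\le 0$ with $t=\sqrt{1-\alpha/4}$ and $h$ a quartic polynomial; checking $h\big(\sqrt{1-2.11/4}\big)>0$ finishes the argument. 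Your proposed sign patterns $(+,\circ,\circ)$, $(-,\circ,\circ)$, $(\circ,\circ,\circ)$ are not the right ones --- in particular, $(\circ,\circ,\circ)$ always satisfies $2\cdot LP\ge ALG$ (this is exactly Section~\ref{subsec:CCC000}) and therefore cannot constrain $\alpha$ above~$2$.
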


\begin{proof}
    Suppose \textsc{LP-CCC} is an expected $\alpha$-approximation algorithm, where $2 \leq \alpha < 4$.

    Plugging $(\frac{1}{2}, \frac{1}{2}, 1)$ with $(\circ, \circ, -)$, we obtain:
    \begin{equation}
        \alpha \left(1-f^\circ\left(\frac{1}{2}\right)\right)\geq 2\left(1-f^\circ\left(\frac{1}{2}\right)\right)+\left(1-f^\circ\left(\frac{1}{2}\right)\right)^2,
    \end{equation}
    which implies:
    \begin{equation}
    \label{eq:ccc_lb_upper}
        f^\circ\left(\frac{1}{2}\right) \geq 3 - \alpha.
    \end{equation}

    Next, plugging $\left(\frac{1}{4}, \frac{1}{4}, \frac{1}{2}\right)$ with $(+, +, \circ)$ yields:
    \begin{equation}
        \frac{1 - f^+\left(\frac{1}{4}\right)^2 + 2\left(f^+\left(\frac{1}{4}\right) + f^\circ\left(\frac{1}{2}\right) - 2f^+\left(\frac{1}{4}\right)f^\circ\left(\frac{1}{2}\right)\right)}{
        \frac{3}{4}\left(1 - f^+\left(\frac{1}{4}\right)^2\right) + \frac{1}{2}\left(1 - f^+\left(\frac{1}{4}\right)f^\circ\left(\frac{1}{2}\right)\right)
        } \leq \alpha,
    \end{equation}
    which can be rewritten as:
    \begin{equation}
        \left(2 + \left(\frac{\alpha}{2} - 4\right)f^+\left(\frac{1}{4}\right)\right) f^\circ\left(\frac{1}{2}\right) \leq
        \frac{\alpha}{4} \left(5 - 3f^+\left(\frac{1}{4}\right)^2\right) + \left(f^+\left(\frac{1}{4}\right) - 1\right)^2 - 2.
    \end{equation}

    Using the known bound from~\cite{chawla2015near}, plugging $(0, x, x)$ with $(+, +, +)$ yields:
    \begin{equation}
    \label{eq:ccc_lb_lower1}
        f^+(x) \leq 1 - \sqrt{1 - \alpha x},
    \end{equation}
    which implies:
    \begin{align*}
        \left(2 + \left(\frac{\alpha}{2} - 4\right)f^+\left(\frac{1}{4}\right)\right)
        &\geq \left(2 + \left(\frac{\alpha}{2} - 4\right)\left(1 - \sqrt{1 - \frac{\alpha}{4}}\right)\right)\\
        &= 2 \sqrt{1 - \frac{\alpha}{4}} \left(\left(\sqrt{1 - \frac{\alpha}{4}} - \frac{1}{2}\right)^2 + \frac{3}{4}\right) > 0,
    \end{align*}
    for any $\alpha < 4$.

    Therefore, we obtain the following upper bound on $f^\circ\left(\frac{1}{2}\right)$:
    \begin{equation}
    \label{eq:ccc_lb_lower2}
        f^\circ\left(\frac{1}{2}\right) \leq \frac{
        \frac{\alpha}{4}\left(5 - 3f^+\left(\frac{1}{4}\right)^2\right) + \left(f^+\left(\frac{1}{4}\right) - 1\right)^2 - 2
        }{
        2 + \left(\frac{\alpha}{2} - 4\right)f^+\left(\frac{1}{4}\right)
        }.
    \end{equation}

    Define
    \[
        g(x,y):=\frac{
        \frac{y}{4}\left(5 - 3x^2\right) + \left(x - 1\right)^2 - 2
        }{
        2 + \left(\frac{y}{2} - 4\right)x
        }
    \]
    for $x\in \left[0, 1-\sqrt{1-\frac{y}{4}}\right]$ and $y\in [2,4)$, to examine the maximum possible value of $f^\circ\left(\frac{1}{2}\right)$ for a given $\alpha$.

    The partial derivative of $g(x,y)$ with respect to $x$ is:
    \begin{equation}
        \frac{\partial}{\partial x}g(x,y)=\frac{3y-4}{2(8-y)}+\frac{5y^{3}-84y^{2}+368y-448}{2(8-y)(4-8x+xy)^{2}}.
    \end{equation}

    Within the given region, $\frac{\partial}{\partial x}g(x,y)$ is nonzero, as shown in Figure~\ref{fig:CCC_proof_1}. Moreover, evaluating at $x = 0$ gives:
    \begin{align}
        \left.\frac{\partial}{\partial x}g(x,y)\right|_{x=0}=\frac{5 y^3 - 84 y^2 + 416 y - 512}{32(8-y)} = -\frac{5 y^2 - 44 y + 64}{32} \geq \frac{1}{8}.
    \end{align}

    \begin{figure}
        \centering
        \includegraphics[width=0.4\linewidth]{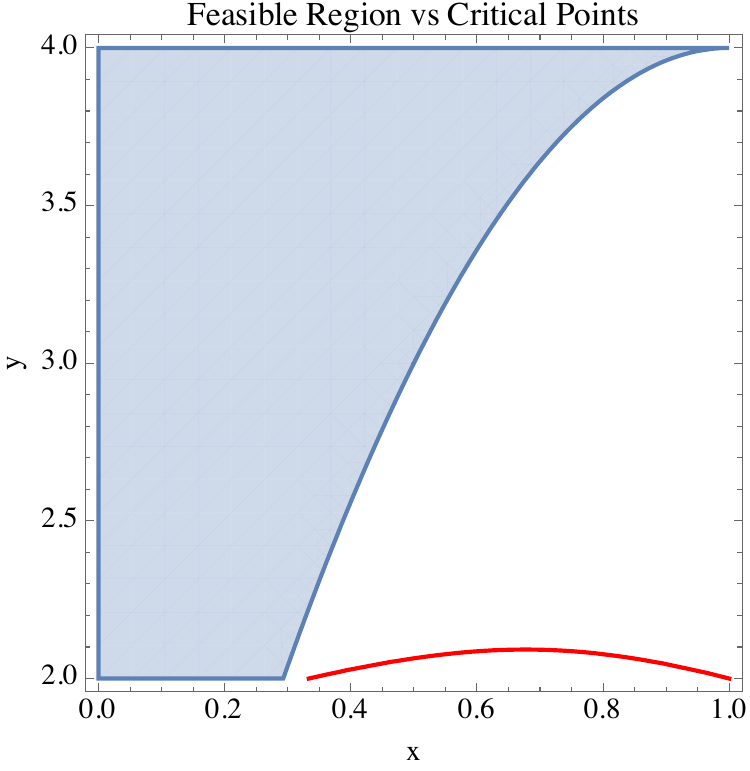}
        \caption{Comparison between the domain of $g$ and the curve where $\frac{\partial}{\partial x}g(x,y)=0.$}
        \label{fig:CCC_proof_1}
    \end{figure}

    Therefore, for any $y_0 \in [2,4)$, the function $g(x,y_0)$ is strictly increasing within the region. Consequently,
    \[
        g(x,y_0)\leq g\left(1-\sqrt{1-\frac{y_0}{4}},y_0\right),
    \]
    which corresponds to the parabolic boundary of the region depicted in Figure~\ref{fig:CCC_proof_1}.

    Parameterizing this curve as $(1-t,4(1-t^2))$ for $t\in\left(0,\frac{1}{\sqrt2}\right]$, the value of $g$ along the curve is:
    \begin{align}
        g(1-t,4(1-t^2))&=\frac{(1-t^2)(2+6t-3t^2)+t^2-2}{2-2(t^2+1)(1-t)}\\
        &=\frac{3 t^3 - 6 t^2 - 4 t + 6}{2(t^2-t+1)}.
    \end{align}
This is the upper bound for $f^\circ\left(\frac{1}{2}\right)$ when $\alpha = 4(1 - t^2)$.

    On the other hand, from~\eqref{eq:ccc_lb_upper}, the lower bound is $3 - \alpha = 3 - 4(1 - t^2) = 4t^2 - 1$.

    Therefore, the inequality
    \begin{equation}
        2(t^2 - t + 1)(4t^2 - 1) \leq 3t^3 - 6t^2 - 4t + 6
    \end{equation}
    must hold.

    Define:
    \[
        h(t) := 2(t^2 - t + 1)(4t^2 - 1) - (3t^3 - 6t^2 - 4t + 6) = 8t^4 - 11t^3 + 12t^2 + 6t - 8.
    \]

    Since $h''(t) = 96t^2 - 66t + 24 > 0$ and $h'(0) = 6 > 0$, $h$ is convex and increasing for $t \geq 0$.

    Finally, since $h\left(\sqrt{1 - \frac{2.11}{4}}\right) > 0$, we conclude that $h(t) > 0$ when $t \geq \sqrt{1 - \frac{2.11}{4}}$, implying that the lower bound for $f^\circ\left(\frac{1}{2}\right)$ exceeds the upper bound when $\alpha \leq 2.11$.

    Therefore, the approximation factor of \textsc{LP-CCC} for CCC must be greater than $2.11$.
\end{proof}

Analogous to the classical CC setting—where the lower bound and the approximation ratio of \textsc{LP-Pivot} differ by less than $0.04$~\cite{chawla2015near}—augmenting the LP rounding with a suitable $f^\circ$ yields the following:

\begin{theorem}
\label{thm:CCC_alg}
\textsc{LP-CCC}, using the rounding functions in~\eqref{eq:round_CCC_pm} and~\eqref{eq:round_CCC_0}, achieves a $2.15$-approximation for Chromatic Correlation Clustering. 
\end{theorem}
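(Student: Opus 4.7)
The plan is to apply the triple-based framework of Section~\ref{subsec:analysis_CCC} and show that, with the rounding functions~\eqref{eq:round_CCC_pm} and~\eqref{eq:round_CCC_0} and $\alpha = 2.15$, the inequality
\[
e.cost_w(uv) + e.cost_u(vw) + e.cost_v(wu) \leq \alpha\bigl(e.lp_w(uv) + e.lp_u(vw) + e.lp_v(wu)\bigr)
\]
holds for every triple $uvw$ and every pivot. First I would dispose of edges not lying inside any single color class $S_c$: by the observation at the start of Section~\ref{subsec:analysis_CCC}, every such edge has LP contribution at least $\tfrac{1}{2}$, so its algorithmic cost of at most $1$ is automatically within a factor of $2 < 2.15$. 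Since the per-color pivot runs are independent, the remaining work is to verify the triple inequality within a fixed $S_c$, with edges of color $c$, $\gamma$, and other colors treated as $E^+$, $E^-$, $E^\circ$ respectively.

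Within a fixed class, there are ten sign-multisets on $(uv, vw, wu) \in \{+,-,\circ\}^3$ up to vertex relabeling. The four configurations without a neutral edge, namely $(+,+,+)$, $(+,+,-)$, $(+,-,-)$, $(-,-,-)$, are exactly the ones analyzed by Chawla et al.~\cite{chawla2015near} with the same $f^+$ and $f^-(x)=x$ from~\eqref{eq:round_CCC_pm}; they certify a ratio of roughly $2.06 < 2.15$, so these cases transfer verbatim. The remaining six configurations $(\circ,\circ,\circ)$, $(+,\circ,\circ)$, $(-,\circ,\circ)$, $(+,+,\circ)$, $(+,-,\circ)$, $(-,-,\circ)$ each contain at least one neutral edge and are where $f^\circ$ matters.

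For each such configuration, I would apply the lower bound~\eqref{eq:ccc_elp} on $e.lp$ by replacing the unknown $x_{uv}^{\phi(uv)}$ for a $\circ$-edge by the conservative quantity $\max\{\tfrac12,\,1-x_{uv}^c,\,1-x_{vw}^c,\,1-x_{wu}^c\}$. This turns the gap
\[
\mathcal{C}(x_{uv}^c, x_{vw}^c, x_{wu}^c) := 2.15 \cdot LP(uvw) - ALG(uvw)
\]
into an explicit piecewise rational function of the three LP variables, with breakpoints at $0.19$ and $0.5095$ inherited from $f^+$, at $0.5$ inherited from $f^\circ$, and additional breakpoints where the active term of the max changes. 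I would then check $\mathcal{C} \geq 0$ on each resulting subregion of the feasibility polytope (cut out by the triangle inequalities $x_{uv}^c + x_{vw}^c \geq x_{wu}^c$ and their permutations together with $x_{ij}^c \in [0,1]$) by inspecting interior critical points of the polynomial restriction and the finitely many polytope vertices. Figure~\ref{fig:round_CCC_region} is essentially the projected summary of this verification for the tightest $(\circ,\circ,-)$-configurations: the piecewise-linear $f^\circ$ in~\eqref{eq:round_CCC_0} was designed so that its graph lies strictly outside the shaded infeasibility region at every breakpoint.

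The main obstacle is the sheer multiplicity of subcases: six edge-sign configurations, each further split by the breakpoints of $f^+$ and $f^\circ$ and by the active argument of the max, over a three-dimensional metric polytope. Because the target ratio $2.15$ sits only $0.04$ above the lower bound $2.11$ of Theorem~\ref{thm:CCC_lb}, the analysis has almost no slack, and the corners near the adversarial configuration $(\tfrac12, \tfrac12, 1)$ with $(\circ, \circ, -)$ that drove the lower-bound proof must be handled with care. I expect this to be a largely mechanical but bookkeeping-heavy case analysis in the spirit of Chawla et al., most cleanly done with computer-assisted symbolic or interval verification on each piece.
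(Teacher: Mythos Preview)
Your plan matches the paper's approach: dispose of cross-class edges by the factor-$2$ argument, inherit the four $\circ$-free configurations from~\cite{chawla2015near}, and verify $\mathcal{C}\geq 0$ on the six remaining configurations using the conservative bound~\eqref{eq:ccc_elp}. The paper executes this by hand via monotonicity and concavity reductions (including a single-$\circ$ analogue of Lemma~\ref{lem:vd_bd}) rather than by computer, and the infeasibility region in Figure~\ref{fig:round_CCC_region} actually comes from the $(\circ,+,+)$ configuration rather than $(\circ,\circ,-)$.
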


\subsection{Proof for Theorem~\ref{thm:CCC_alg}}

    Unlike the pseudometric-weighted case, the variables $x_{vw}^c$ and $x_{wu}^c$ can multiplicatively affect the value of $e.lp_w(u,v)$ in~\eqref{eq:ccc_elp}. Therefore, Lemma~\ref{lem:vd_bd} does not apply, and we must instead verify the inequality $\alpha \cdot LP - ALG \geq 0$ directly.

For sign configurations that do not involve a `$\circ$' (i.e., all edges have $+$ or $-$ signs), the inequality $\alpha \cdot LP - ALG \geq 0$ with $\alpha = 2.06 - \varepsilon$ is already established in~\cite{chawla2015near}. Thus, it suffices to check the remaining $6$ configurations involving at least one `$\circ$' sign.

As before, define $(x, y, z) := (x_{uv}, x_{vw}, x_{wu})$ and $(p_x, p_y, p_z) := (p_{uv}, p_{vw}, p_{wu})$.

\subsubsection{$(\circ,\circ,\circ)$ Triangles}
\label{subsec:CCC000}

\[
2\cdot e.lp_w(u,v) \geq 2(1 - p_{uw}p_{vw})\cdot \frac{1}{2} = (1 - p_{uw}p_{vw}) = e.cost_w(u,v).
\]

Since the same inequality holds for all permutations of the endpoints, we conclude:

\[
2 \cdot LP - ALG \geq 0.
\]

\subsubsection{$(\circ,\circ,+)$ Triangles}
\label{subsec:CCC00+}

If $z \geq \frac{1}{2}$,
\[
2 \cdot e.lp_v(w,u) - e.cost_v(w,u)
\geq (1 - p_x p_y) - (p_x + p_y - 2 p_x p_y)
= (1 - p_x)(1 - p_y) \geq 0,
\]
and by the same reasoning as in Section~\ref{subsec:CCC000}, we also have
\[
2 \cdot e.lp_w(u,v) \geq e.cost_w(u,v), \quad 2 \cdot e.lp_u(v,w) \geq e.cost_u(v,w).
\]
Hence, $2 \cdot LP - ALG \geq 0$.

\medskip

Otherwise, i.e., if $1 - 2z \geq 0$:
\begin{align*}
2 \cdot LP - ALG
&= 2 \cdot \left( (2 - (p_x + p_y)p_z) \max\left\{ \frac{1}{2}, 1 - x, 1 - y, 1 - z \right\} + (1 - p_x p_y)z \right) \\
&\quad - \left( 1 - p_y p_z + 1 - p_x p_z + p_x + p_y - 2 p_x p_y \right) \\
&\geq 2 \cdot \left( (2 - (p_x + p_y)p_z)(1 - z) + (1 - p_x p_y)z \right) \\
&\quad - \left( 2 - (p_x + p_y)p_z + (1 - p_x p_y) - (1 - p_x)(1 - p_y) \right) \\
&= (1 - 2z)(2 - (p_x + p_y)p_z) + (1 - 2z)(p_x p_y - 1) + (1 - p_x)(1 - p_y) \\
&= (1 - 2z)\left( 2 - (p_x + p_y)p_z + p_x p_y - 1 \right) + (1 - p_x)(1 - p_y) \\
&\geq (1 - 2z)\left( 2 - p_x - p_y + p_x p_y - 1 \right) + (1 - p_x)(1 - p_y) \\
&= (1 - 2z)(1 - p_x)(1 - p_y) + (1 - p_x)(1 - p_y) \geq 0.
\end{align*}

    \subsubsection{$(\circ,\circ,-)$ Triangles}
    \label{subsec:CCC00-}
  If $z \leq 1 - \frac{1}{\alpha}$, then as in Section~\ref{subsec:CCC00+}, we have $\alpha \cdot LP - ALG \geq 0$.

    Otherwise, since $1-z\leq 1/\alpha<1/2$,
    \begin{align*}
    &\alpha\cdot LP-ALG\\
    &=\alpha\cdot\left((2-(p_x+p_y)z)\max\left\{\frac{1}{2},1-x,1-y,1-z\right\}+(1-p_xp_y)(1-z)\right)\\
    &\quad -\left(1-p_yz+1-p_xz+(1-p_x)(1-p_y)\right)\\
    &=\alpha(2-(p_x+p_y)z)\max\left\{\frac{1}{2},1-x,1-y\right\}
    +\alpha(1-p_xp_y)(1-z)\\
    &\quad-\left(2-(p_x+p_y)z+(1-p_x)(1-p_y)\right).
    \end{align*}
  Fixing the values of $x$ and $y$, the function is affine with respect to $z$, and its slope is given by
\begin{align*}
    &-\alpha(p_x + p_y) \max\left\{\frac{1}{2}, 1 - x, 1 - y \right\} - \alpha(1 - p_x p_y) + (p_x + p_y) \\
    &\leq \left(-\frac{\alpha}{2} + 1\right)(p_x + p_y) - \alpha(1 - p_x p_y) \leq 0.
\end{align*}
Therefore, the function is minimized when $z$ is maximized.

Now, \emph{w.l.o.g.}, assume $x \leq y$, and consider the following three cases:

    \begin{enumerate}
        \item $x\geq 1/2$

        $p_x,p_y\geq 0.85$ as $x,y\geq 1/2$. Applying $z=1$,
        \begin{align*}
            &\alpha\cdot LP-ALG\\
            &\geq \alpha(2-p_x-p_y)\max\left\{\frac{1}{2},1-x,1-y\right\}-(2-p_x-p_y)-(1-p_x)(1-p_y)\\
            &= \left(\frac{\alpha}{2}-1\right)(2-p_x-p_y)-(1-p_x)(1-p_y)\\
            &\geq \left(\frac{\alpha}{2}-1\right)(2-p_x-p_y)-\frac{1}{4}(2-p_x-p_y)^2\\
            &=\frac{1}{4}(2-(p_x+p_y))(2\alpha -6+(p_x+p_y)).
        \end{align*}
        Since $6-2\alpha=1.7\leq p_x+p_y\leq 2$, $\alpha\cdot LP-ALG\geq 0$.
        \item $x<1/2,\,x+y\geq 1$
        
        Applying $z=1$ again,
        \begin{align*}
            &\alpha\cdot LP-ALG\\
            &\geq \alpha(2-p_x-p_y)\max\left\{\frac{1}{2},1-x,1-y\right\}-(2-p_x-p_y)-(1-p_x)(1-p_y)\\
            &= (\alpha(1-x)-1)(2-p_x-p_y)-(1-p_x)(1-p_y).
        \end{align*}
        Note that the function is affine with respect to $p_y$; therefore, the function is minimized when $p_y$ reaches the boundary, i.e., $y=1$ or $y=1-x$.

        If $y=1$, then $p_y=1$, hence
        \begin{align*}
            &(\alpha(1-x)-1)(2-p_x-p_y)-(1-p_x)(1-p_y)\\
            &=(\alpha(1-x)-1)(1-p_x)\\
            &\geq \left(\frac{\alpha}{2}-1\right)(1-p_x)\geq 0.
        \end{align*}

        On the other hand, if $y=1-x$, then $p_y=\frac{3}{10}y+\frac{7}{10}=1-\frac{3}{10}x$ as $y>1/2$, hence
        \begin{align*}
            &(\alpha(1-x)-1)(2-p_x-p_y)-(1-p_x)(1-p_y)\\
            &=(\alpha(1-x)-1)\left(1-\frac{17}{10}x+\frac{3}{10}x\right)-\frac{3}{10}\left(1-\frac{17}{10}x\right)x\\
            &=3.52\cdot \left(x-\frac{1}{2}\right)\left(x-\frac{115}{176}\right)\geq 0
        \end{align*}
       
        as $x< 1/2$ and $p_x=\frac{17}{10}x$.

        \item $x+y<1$

        Applying $z=x+y$,
        \begin{align*}
            &\alpha\cdot LP-ALG\\
            &=\alpha(2-(p_x+p_y)(x+y))\max\left\{\frac{1}{2},1-x,1-y\right\}
    +\alpha(1-p_xp_y)(1-x-y)\\
    &\quad -\left(2-(p_x+p_y)(x+y)+(1-p_x)(1-p_y)\right)\\
    &=(\alpha(1-x)-1)(2-(p_x+p_y)(x+y))+\alpha(1-p_xp_y)(1-x-y)-(1-p_x)(1-p_y)\\
    &\geq 0.3\cdot (\alpha(1-x)-1)+\alpha(1-p_xp_y)(1-x-y)-(1-p_x)(1-p_y).
        \end{align*}
        For the last inequality, since $p_x\leq 1.7x$ and $p_y\leq 1.7y$, $2-(p_x+p_y)(x+y)\geq 2-1.7(x+y)^2\geq0.3$.

        Differentiating the function by $y$,
        \begin{align*}
            &-\alpha p_x\frac{dp_y}{dy}\cdot (1-x-y)-\alpha(1-p_xp_y)+\frac{dp_y}{dy}\cdot (1-p_x)\\
            &\leq -\alpha p_x\frac{dp_y}{dy}\cdot (1-x-y)-\alpha(1-p_x)+\frac{dp_y}{dy}\cdot (1-p_x)<0
        \end{align*}
        since $\frac{dp_y}{dy}\leq 1.7<\alpha$.
        Hence, the function is minimized when $y$ is maximized.

        Applying $y=1-x$, since $y=1-x>1/2$, $p_y=\frac{3}{10}y+\frac{7}{10}=1-\frac{3}{10}x$ and thus
        \begin{align*}
            & 0.3\cdot (\alpha(1-x)-1)+\alpha(1-p_xp_y)(1-x-y)-(1-p_x)(1-p_y)\\
            &=0.3\cdot (\alpha(1-x)-1)-\left(1-\frac{17}{10}x\right)\cdot \frac{3}{10}x\\
            &=0.51\left(x-\frac{1}{2}\right)\left(x-\frac{23}{17}\right)\geq 0
        \end{align*}
        since $x<1/2$.
    \end{enumerate}
   
    \subsubsection{$(\circ,+,+)$ Triangles}
\label{subsec:CCC0++}

For Sections~\ref{subsec:CCC0++} and~\ref{subsec:CCC0+-}, we utilize the following lemma, which serves as a weaker version of Lemma~\ref{lem:vd_bd}:

\begin{lemma}
\label{lem:ccc_bd}
Let the sign configuration be such that the edge $uv$ is the only edge labeled with a `$\circ$' among the three edges of the triangle. Suppose the rounding function $f^\circ$ is piecewise affine. Then $\mathcal{C} \geq 0$ for all possible configurations of $(x_{uv}, x_{vw}, x_{wu})$ if the inequality holds in each of the following cases:
\begin{enumerate}
    \item The triangle inequality is tight for $(x_{uv}, x_{vw}, x_{wu})$;
    \item $x_{uv}$ lies at an endpoint of a domain interval for some piece of the function $f^\circ$;
    \item $x_{uv} = \min\left\{ \frac{1}{2}, x_{vw}, x_{wu} \right\}$.
\end{enumerate}
\end{lemma}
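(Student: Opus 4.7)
The plan is to fix $x_{vw}$ and $x_{wu}$ arbitrarily and regard $\mathcal{C}=\alpha\cdot LP-ALG$ as a function of the single variable $x_{uv}$. With $(x_{vw},x_{wu})$ frozen, the feasible range for $x_{uv}$ is the closed interval $[\,|x_{vw}-x_{wu}|,\ \min\{1,\,x_{vw}+x_{wu}\}\,]$, whose two endpoints are precisely the Case~1 configurations. I will argue that $\mathcal{C}$ is piecewise affine in $x_{uv}$ on this interval with piece boundaries falling only into Cases~2 and~3, after which the conclusion is immediate: an affine function on a closed interval attains its minimum at an endpoint, so the minimum of $\mathcal{C}$ over the feasible range is witnessed by a configuration of one of the three listed types.

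To produce the partition, I would introduce two families of breakpoints on the feasible interval. The first are the endpoints of the affine pieces of $f^\circ$, on which $p_{uv}=f^\circ(x_{uv})$ is affine; these are Case~2 points. The second is the single point $x_{uv}=\min\{1/2,x_{vw},x_{wu}\}$, which is Case~3. Because $x_{vw}$ and $x_{wu}$ are fixed, only the entry $1-x_{uv}$ is mobile inside the expression $\max\{1/2,\,1-x_{uv},\,1-x_{vw},\,1-x_{wu}\}$ that appears in the $e.lp_w(u,v)$ bound~\eqref{eq:ccc_elp} for the $\circ$ edge, so the max has exactly one transition point in $x_{uv}$ and coincides with either the affine expression $1-x_{uv}$ (for smaller $x_{uv}$) or a value independent of $x_{uv}$ (for larger $x_{uv}$).

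Next I would check that on each resulting sub-interval, $\mathcal{C}$ is affine in $x_{uv}$. The variable $x_{uv}$ is absent from $e.cost_w(u,v)$; it enters $e.cost_u(v,w)$ and $e.cost_v(w,u)$ only through $p_{uv}$, with the $+$- and $-$-edge cost formulas linear in each $p$; it enters $e.lp_u(v,w)$ and $e.lp_v(w,u)$ only through the factors $1-p_{uv}p_{uw}$ and $1-p_{uv}p_{vw}$, each linear in $p_{uv}$ while the remaining factor is independent of $x_{uv}$; and it enters $e.lp_w(u,v)$ only through the max factor, which is affine on the sub-interval, while the accompanying factor $1-p_{uw}p_{vw}$ is independent of $x_{uv}$. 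Summing these contributions yields an affine function of $x_{uv}$, minimized at a sub-interval endpoint, i.e., at a Case~1, 2, or 3 configuration.

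The main obstacle I anticipate is the bookkeeping needed to confirm that no hidden $p_{uv}^{2}$ term creeps into $\mathcal{C}$; this follows by direct inspection of the $e.cost$ and $e.lp$ formulas but has to be tracked uniformly across the sign configurations $(\circ,+,+)$, $(\circ,+,-)$, and $(\circ,-,-)$ in which the lemma will be invoked. A secondary subtlety is that the max expression appears in a lower bound on $e.lp_w(u,v)$ rather than its exact value, so the argument actually proves nonnegativity of the surrogate $\alpha\cdot LP_{\mathrm{lb}}-ALG$; since this surrogate bounds $\mathcal{C}$ from below, the desired conclusion $\mathcal{C}\geq 0$ still follows.
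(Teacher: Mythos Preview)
Your proposal is correct and follows essentially the same approach as the paper: fix $x_{vw},x_{wu}$, observe that $\mathcal{C}$ is piecewise affine in $x_{uv}$ with breakpoints coming only from the pieces of $f^\circ$ (Case~2) and from the transition of the $\max$ in $e.lp_w(u,v)$ at $x_{uv}=\min\{1/2,x_{vw},x_{wu}\}$ (Case~3), so the minimum over the feasible interval is attained at one of the three listed configurations. If anything, your write-up is slightly more careful than the paper's in explicitly identifying the feasible range $[\,|x_{vw}-x_{wu}|,\min\{1,x_{vw}+x_{wu}\}\,]$ whose endpoints give Case~1, and in noting the surrogate-versus-exact $LP$ distinction.
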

 
    \begin{proof}
Fix the values of $x_{vw}$ and $x_{wu}$, and treat $\mathcal{C}$ as a function of $x_{uv}$ and $p_{uv}$. Within this function, the term involving $x_{uv}$ appears only in 
\[
e.lp_w(uv) = (1 - p_{vw}p_{wu}) \cdot \max\left\{ \frac{1}{2}, 1 - x_{uv}, 1 - x_{vw}, 1 - x_{wu} \right\},
\]
which is piecewise affine with two domain intervals: 
\[
\left[0, \min\left\{ \frac{1}{2}, x_{vw}, x_{wu} \right\} \right] \quad \text{and} \quad \left[ \min\left\{ \frac{1}{2}, x_{vw}, x_{wu} \right\}, 1 \right].
\]
The remaining terms in $\mathcal{C}$ are affine in $p_{uv}$.

Hence, $\mathcal{C}$ is a piecewise affine function defined over two domains in the $(x_{uv}, p_{uv})$ space:
\[
\left[ 0, \min\left\{ \frac{1}{2}, x_{vw}, x_{wu} \right\} \right] \times [0,1]
\quad \text{and} \quad 
\left[ \min\left\{ \frac{1}{2}, x_{vw}, x_{wu} \right\}, 1 \right] \times [0,1].
\]
Since $f^\circ$ is piecewise affine, substituting $p_{uv} = f^\circ(x_{uv})$ yields a piecewise affine function in $x_{uv}$. The breakpoints of this composition occur at:
\begin{enumerate}
    \item the endpoints of the domain intervals of $f^\circ$, and
    \item the point $x_{uv} = \min\left\{ \frac{1}{2}, x_{vw}, x_{wu} \right\}$ where the domain of the piecewise max function changes.
\end{enumerate}

Therefore, the minimum of $\mathcal{C}$ occurs either at domain boundaries or at breakpoints—precisely the three cases specified in the lemma.
\end{proof}

    Moreover, in this particular case, an even stronger argument can be made: Examining the full formula of $\mathcal{C}=\alpha\cdot LP-ALG$ with this sign configuration,
    \begin{align*}
    &\alpha\cdot LP-ALG\\
    &=\alpha\cdot\left((1-p_yp_z)\max\left\{\frac{1}{2},1-x,1-y,1-z\right\}+(1-p_xp_z)y+(1-p_xp_y)z\right)\\
    &\quad-\left(1-p_yp_z+p_x+p_z-2p_xp_z+p_x+p_y-2p_xp_y\right)\\
    &=\alpha(1-p_yp_z)\max\left\{\frac{1}{2},1-x,1-y,1-z\right\}+\left(-\alpha(yp_z+p_yz)-2+2(p_y+p_z)\right)p_x\\
    &\quad+\alpha(y+z)-(1+p_y+p_z-p_yp_z)
    \end{align*}
    The first term is a monotonically decreasing function of $x$; the second term is a monotonically decreasing function of $p_x$ since $-\alpha(yp_z+p_yz)-2+2(p_y+p_z)\leq0$, as shown in the following Lemma:
    \begin{lemma}
        For $f^+$ given as equation~\ref{eq:round_CCC_pm} and $\alpha\in \left[\frac{1}{0.5095},\frac{2}{3\cdot 0.5095-0.19}\right]\approx [1.963,2.988]$,
        \[\alpha(xf^+(y)+f^+(x)y)+2-2(f^+(x)+f^+(y))\geq 0\]
        in $[0,1]^2$.
    \end{lemma}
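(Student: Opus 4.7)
The plan is to prove the inequality by a case analysis on the piecewise structure of $f^+$, exploiting the symmetry $g(x,y) = g(y,x)$, where I write $g(x,y) := \alpha(x f^+(y) + f^+(x) y) + 2 - 2(f^+(x) + f^+(y))$. Let $a = 0.19$ and $b = 0.5095$ denote the breakpoints of $f^+$. I would split the analysis based on which of the three intervals $[0, a)$, $[a, b)$, $[b, 1]$ contains $x$ and $y$, giving six cases (with symmetric pairs identified).

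The five ``easy'' cases are essentially routine. If both $x, y < a$, then $f^+$ vanishes on both and $g = 2$. If $x < a$ and $y \in [a, b)$, then $g = (\alpha x - 2) f^+(y) + 2$ is monotone in $f^+(y)$ and yields $g \ge \alpha x \ge 0$ in the worst case $f^+(y) \to 1$. The case $x < a$, $y \ge b$ gives $g = \alpha x \ge 0$ directly, and $x, y \ge b$ gives $g = \alpha(x+y) - 2 \ge 2\alpha b - 2 \ge 0$ by the hypothesis $\alpha \ge 1/b$. The mixed case $x \in [a, b)$, $y \ge b$ reduces to $g = \alpha x + (\alpha y - 2) f^+(x)$; viewing this as a function of $u := (x-a)/(b-a) \in [0,1)$, concavity of $u \mapsto \alpha a + \alpha(b-a)u + (\alpha y - 2)u^2$ on the interval forces the minimum to an endpoint, giving either $\alpha a > 0$ or $\alpha(b+y) - 2 \ge 2\alpha b - 2 \ge 0$.

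The main work lies in the remaining case $x, y \in [a, b)$. I would substitute $u = (x-a)/(b-a)$, $v = (y-a)/(b-a) \in [0, 1)$ so that $f^+(x) = u^2$ and $f^+(y) = v^2$, then pass to symmetric coordinates $s = u + v$, $p = uv$ to obtain
\[
g = (\alpha a - 2)(s^2 - 2p) + \alpha(b - a)\, p s + 2 = (\alpha a - 2) s^2 + p\bigl(4 - 2\alpha a + \alpha(b - a) s\bigr) + 2.
\]
A short sign check confirms that the bracket multiplying $p$ is strictly positive throughout $s \in [0, 2]$ for $\alpha$ in the stated range (since $\alpha \le 2.988 \ll 2/a$), so $g$ is minimized by taking the smallest admissible $p$ for each $s$: the constraints $u, v \in [0, 1]$ force $p_{\min} = 0$ when $s \le 1$ and $p_{\min} = s - 1$ when $s \ge 1$.

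It remains to verify two one-dimensional inequalities. For $s \in [0, 1]$, $g = (\alpha a - 2) s^2 + 2$ is concave and decreasing in $s^2$, attaining its minimum $\alpha a > 0$ at $s = 1$. For $s \in [1, 2]$, substituting $p = s - 1$ yields the quadratic
\[
g(s) = (\alpha b - 2) s^2 + (4 - \alpha(a + b)) s + 2\alpha a - 2,
\]
whose leading coefficient is negative throughout the stated range (since $\alpha \le 2.988 < 2/b$); hence $g$ is concave on $[1, 2]$ and its minimum is at an endpoint, with $g(1) = \alpha a > 0$ and $g(2) = 2\alpha b - 2 \ge 0$, the latter using exactly the lower bound $\alpha \ge 1/b$. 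The principal obstacle is the final case: setting up the $(s,p)$ parametrization and verifying positivity of the coefficient of $p$, which is what makes the two-dimensional minimization collapse onto the one-dimensional boundary where the endpoint evaluations become elementary.
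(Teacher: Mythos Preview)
Your proof is correct and takes a genuinely different route from the paper. The paper also splits by the piecewise regions, but after reducing to $x\le y$ it shows $g$ is concave in $y$ (since $\alpha x-2<0$) and pushes $y$ to an endpoint $y=x$ or $y=b$; on the diagonal $y=x$ it analyzes the cubic $2\bigl((\alpha x-2)((x-a)/(b-a))^2+1\bigr)$ and argues it is decreasing on $[0,b]$, which is exactly where the upper bound $\alpha\le 4/(3b-a)\approx 2.988$ is needed (note the statement has a typo: the displayed fraction $2/(3b-a)$ should be $4/(3b-a)$). Your symmetric-coordinate reduction $(s,p)=(u+v,uv)$ collapses the inner case to a quadratic in $s$ rather than a cubic in $x$, and only needs $\alpha<2/b\approx 3.925$ for the leading coefficient to be negative, so your argument is both cleaner and valid on a wider range of~$\alpha$.

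One small imprecision: in your mixed case $x\in[a,b)$, $y\ge b$, you assert concavity of $u\mapsto \alpha a+\alpha(b-a)u+(\alpha y-2)u^2$, but for $\alpha$ near $2.988$ and $y$ near $1$ one has $\alpha y>2$ and the parabola is convex. This is harmless---when convex the vertex sits at $u^*=-\alpha(b-a)/(2(\alpha y-2))<0$, so the function is increasing on $[0,1)$ and the minimum is still at $u=0$---but you should say so explicitly.
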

    \begin{proof}
        \emph{w.l.o.g.}\ $x\leq y$.
        Consider the $2$ possible cases, based on the value of $y$.
        \begin{enumerate}
            \item $y\geq0.5095$

            In this case, $f^+(y)=1$ and thus
            \begin{align*}
                &\alpha(xf^+(y)+f^+(x)y)+2-2(f^+(x)+f^+(y))\\
                &=\alpha(x+f^+(x)y)-2f^+(x),
            \end{align*}
            which is an increasing function of $y$.
            
            If $x\geq0.5095$, setting $y=x$,
            \[\alpha(x+f^+(x)y)-2f^+(x)\geq 2\alpha x-2\geq0\]
            since $x>1/2$.
            Otherwise, setting $y=0.5095$,
            \[\alpha(x+f^+(x)y)-2f^+(x)\geq (\alpha\cdot 0.5095-2)f^+(x)+\alpha x\]
            is concave of $x$ since $\alpha\cdot 0.5095-2<0$ and $f^+$ is convex in the range.
            Hence, this is minimized at either $x=0$ or $x=0.5095$, where the former case results in $0$ and the latter case results in $2\alpha\cdot 0.5095-2\geq 0$.
            \item $y< 0.5095$

            As a function of $y$,
            \begin{align*}
                &\alpha(xf^+(y)+f^+(x)y)+2-2(f^+(x)+f^+(y))\\
                &=(\alpha x-2)f^+(y)+\alpha f^+(x)y+2-2f^+(x)
            \end{align*}
            is concave since $\alpha x-2\leq \alpha\cdot 0.5095-2<0$ and $f^+$ is convex in the range.
            Hence, this is minimized at either $y=x$ or $y=0.5095$, where the latter case reduce to Case 1, and the former case results in
            \begin{align*}
                &(\alpha x-2)f^+(y)+\alpha f^+(x)y+2-2f^+(x)\\
                &=2\left((\alpha x -2)f^+(x)+1\right),
            \end{align*}
            which is $2\geq 0$ if $x<0.19$, or
            \[
                2\left((\alpha x -2)\left(\frac{x-0.19}{0.5095-0.19}\right)^2+1\right)
            \]
            otherwise.
            This is a decreasing function when $x\in\left[0,\frac{0.19+4/\alpha}{3}\right]$, which contains $[0,0.5095]$.
            Therefore, this is minimized when $x=0.5095$, which also reduce to Case 1.
        \end{enumerate}
        This completes the proof of the lemma.
    \end{proof}

    Therefore, as $p_x$ is a monotonically increasing function of $x$, the function is minimized when $x$ is maximized.

    Now, \emph{w.l.o.g.}\ $y\leq z$ and consider all $2$ possible cases:
    \begin{enumerate}
        \item $y+z\geq 1$

        Applying $x=1$,
        \begin{align*}
            &\alpha\cdot LP-ALG\\
            &\geq \alpha\cdot\left((1-p_yp_z)\max\left\{\frac{1}{2},1-y,1-z\right\}+(1-p_y)z+(1-p_z)y\right)\\
    &\quad-(3-p_yp_z-p_z-p_y)\\
    &\geq \alpha\cdot\left(\frac{1}{2}(1-p_yp_z)+(1-p_z)y+(1-p_y)z\right)-(3-p_yp_z-p_z-p_y)\\
    &\geq \alpha\cdot\left(\frac{1}{2}(1-p_yp_z)+(1-p_y)\frac{p_z}{2}+(1-p_z)\frac{p_y}{2}\right)-(3-p_yp_z-p_z-p_y)\\
    &=2.225\left(\left(\frac{6}{89}\right)^2-\left(p_y-\frac{83}{89}\right)\left(p_z-\frac{83}{89}\right)\right).
        \end{align*}
        Since $z\geq 1/2$, $p_z>0.94>\frac{83}{89}$, thus the function is decreasing with respect to $p_y$. The minimum value of the function is therefore $0$ when $p_y=p_z=1$.
    \item $y+z<1$
    
    We use the similar argument with of those conducted by Chawla et al.\cite{chawla2015near} on the $(+,+,-)$ case.
    For simplicity, let $a=0.19,\,b=0.5095$.

    If $y<a$, $p_y=0$, hence
    \begin{align*}
        &\alpha \cdot LP-ALG\\
        &=\alpha\cdot\left((1-y)+(1-p_xp_z)y+z\right)-\left(1+2p_x+p_z-2p_xp_z\right)\\
    &=-\alpha p_xp_z y +\alpha(1+z)-\left(1+2p_x+p_z-2p_xp_z\right).
    \end{align*}
    This is a non-increasing function of $y$, and thus it is sufficient to check on $y=a$.
    
    If $z>b$, $p_z=1$, hence
    \begin{align*}
        &\alpha \cdot LP-ALG\\
        &=\alpha\cdot\left((1-p_y)(1-y)+(1-p_x)y+(1-p_xp_y)z\right)-\left(2-2p_xp_y\right)\\
        &=\alpha(1-p_x p_y)z+\alpha ((1-p_y)(1-y)+(1-p_x)y)-2(1-p_xp_y).
    \end{align*}
    This is a non-decreasing function of $z$, and thus it is sufficient to check on $z=b$.

    Now the problem reduces to only $a\leq y\leq z\leq b$ case, where the formula for $p_y$ and $p_z$ are both quadratic.
    
    We now show that the function is minimized when $y=z$ as well as $x=y+z$.
    Consider the parametrization $(x,y,z)=(x,t-c,t+c)$, where $t\in[a,1/2],\,c\in[0,\min\{t-a,b-t\}],$ and $x=2t$ since $\alpha\cdot LP-ALG$ is a monotonically increasing function of $x$.
    Then
    \begin{align*}
        &\alpha \cdot LP-ALG\\
        &=\alpha\cdot\left((1-p_yp_z)(1-t)+(1-p_xp_z)(t-c)+(1-p_xp_y)(t+c)\right)\\
    &\quad-\left(1+2p_x+p_y+p_z-p_yp_z-2p_xp_z-2p_xp_y\right)\\
    &\quad+\alpha(1-p_y)c.
    \end{align*}
    Fixing the value of $t$, the sum of first $2$ terms is a negative even quartic function of $c$, where the negativity is obtained from $-\alpha(1-t)+1\leq -\alpha(1-b)+1<0$.
    Thus, we can write the function as
    \[P(t)c^4+Q(t)c^2+\alpha(1-p_y)c+R(t),\]
    where $P(t)=\frac{-\alpha(1-t)+1}{(b-a)^4}<0$.
    Since $\alpha(1-p_y)c\geq0$ if $c\geq 0$, it is sufficient to show that $Q\geq 0$ and $\sqrt{\frac{Q}{-P}}\geq \min\{t-a,b-t\}$ to prove that the function is minimized when $c=0$.

    Calculating the $Q$, this results in
    \begin{align*}
        Q(t)
        &=\frac{2}{(b-a)^2}\cdot\left((\alpha(1-t)-1)\left(\frac{t-a}{b-a}\right)^2+\alpha p_x(t-2a)-(1-2p_x)\right)\\
        &=-P(t)\cdot (t-a)^2+\frac{2}{(b-a)^2}\cdot\left(-1+(\alpha t +2-2\alpha a)p_x\right)\\
        &\geq -P(t)\cdot (t-a)^2,
    \end{align*}
    where the final inequality is derived from
    \[\frac{1}{\alpha t+2-2\alpha a}\leq \frac{1}{2-\alpha a}\approx 0.628<0.646=f^\circ (2\cdot 0.19)\leq p_x.\]
    Hence, $Q\geq -P(t)\cdot (t-a)^2>0$ and therefore $\sqrt{\frac{Q}{-P}}\geq t-a\geq \min\{t-a,b-t\}$, which implies that $\alpha \cdot LP-ALG$ is minimized at $c=0$, i.e.\ at $(2t,t,t)$.

    Finally, plugging $(x,y,z)=(2t,t,t)$,
    \begin{align*}
        &\alpha \cdot LP-ALG\\
        &=\alpha\cdot\left((1-p_y^2)(1-y)+2(1-p_xp_y)y\right)-\left(1+2p_x+2p_y-p_y^2-4p_xp_y\right)\\
        &=\left(\alpha(1-p_y^2)(1-t)+2\alpha t-1-2p_y+p_y^2 \right)-(2\alpha p_y t+2-4p_y)p_x.
    \end{align*}

    The top left region of Figure~\ref{fig:round_CCC_region} shows the region $(2t,f^\circ(2t))$ where the formula above is negative when $f^+$ is given as equation~\ref{eq:round_CCC_pm}.
    Since our proposed $f^\circ$ doesn't intersect the region, $\alpha \cdot LP-ALG\geq 0$ if $y=z$, and thus $\alpha \cdot LP-ALG\geq 0$ whenever $a\leq y\leq z\leq b$.
    \end{enumerate}

    \subsubsection{$(\circ,+,-)$ Triangles}
    \label{subsec:CCC0+-}
    We would prove that the lower bound $\mathcal{C}'$ of $\mathcal{C}$, which is defined as
    \begin{align*}
        &\alpha\cdot LP-ALG\\
        &=\alpha\cdot\left((1-p_yz)\max\left\{\frac{1}{2},1-x,1-y,1-z\right\}+(1-p_xp_y)(1-z)+(1-p_xz)y\right)\\
        &\quad-\left(1-p_yz+p_x+z-2p_xz+(1-p_x)(1-p_y)\right)\\
        &\geq \alpha\cdot\left((1-p_yz)\max\left\{\frac{1}{2},1-y,1-z\right\}+(1-p_xp_y)(1-z)+(1-p_xz)y\right)\\
        &\quad-\left(1-p_yz+p_x+z-2p_xz+(1-p_x)(1-p_y)\right)\triangleq\mathcal{C}'
    \end{align*}
    is nonnegative.
    
    Since this is now affine with respect to $p_x$, this further enhances the argument Lemma~\ref{lem:ccc_bd} to check only on the endpoint.
    \begin{enumerate}
        \item $y\geq 1/2,\,z\leq 1/2$

        As in \ref{subsec:CCC00+} and \ref{subsec:CCC00-}, $2\cdot LP-ALG\geq 0$.

        \item $y,z\geq 1/2$

        Similar with \ref{subsec:CCC0++}, we can deduce a stronger argument than of Lemma~\ref{lem:ccc_bd}: Examining the full formula of $\mathcal{C}'$,
        \begin{align*}
            \mathcal{C}'&=\alpha\cdot\left((1-p_yz)\max\left\{\frac{1}{2},1-y,1-z\right\}+(1-p_xp_y)(1-z)+(1-p_xz)y\right)\\
            &\quad-\left(1-p_yz+p_x+z-2p_xz+(1-p_x)(1-p_y)\right)\\
            &=(-\alpha yz -\alpha p_y(1-z)-p_y+2z)p_x\\
            &\quad +\alpha(1-z)+\alpha y+p_y z-z+p_y-2+\alpha (1-p_yz)\max\left\{\frac{1}{2},1-y,1-z\right\}.
        \end{align*}
        The first term is a monotonically decreasing function of $p_x$ since
        \[-\alpha yz -\alpha p_y(1-z)-p_y+2z=(\alpha (p_y-y)+2)z-(\alpha+1)p_y\]
        is an increasing function of $z$ since $p_y\geq y$ in the range,
        \[-\alpha yz -\alpha p_y(1-z)-p_y+2z\leq -\alpha y-p_y+2\leq -\frac{\alpha}{2}-f^+\left(\frac{1}{2}\right)+2\approx -0.016<0.\]

        Hence, the function is minimized when $x$ is maximized.
        Setting $x=1$,
        \begin{align*}
            \mathcal{C}'&\geq\alpha\cdot\left((1-p_yz)\cdot \frac{1}{2}+(1-p_y)(1-z)+(1-z)y\right)-(2-p_yz-z)\\
            &=\left(-\alpha+1-\alpha y+\left(\frac{\alpha}{2}+1\right)p_y\right)z+\alpha y-\alpha p_y +\frac{3}{2}\alpha -2.
        \end{align*}
        Since $y\geq1/2$ and $p_y\leq 1$, 
        \[-\alpha+1-\alpha y+\left(\frac{\alpha}{2}+1\right)p_y\leq -\alpha+2<0.\]
        Hence, the function is minimized when $z$ is maximized.
        Setting $z=1$,
        \begin{align*}
            \mathcal{C}'
            &\geq \frac{\alpha}{2}-1+\left(-\frac{\alpha}{2}+1\right)p_y\\
            &=\left(\frac{\alpha}{2}-1\right)(1-p_y)\geq 0.
        \end{align*}
        \item $y\leq z,\,y<1/2$
        \begin{align*}
            \mathcal{C}'
            &=\alpha\cdot\left((1-p_yz)(1-y)+(1-p_xp_y)(1-z)+(1-p_xz)y\right)\\
            &\quad-\left(1-p_yz+p_x+z-2p_xz+(1-p_x)(1-p_y)\right)\\
            &=(-\alpha(1+p_y-p_yy-p_xp_y+p_xy)+2p_x+p_y-1)z\\
            &\quad+\alpha(2-p_xp_y)-2+p_y-p_xp_y.
        \end{align*}
        This is a decreasing function of $z$ since
        \begin{align*}
            &-\alpha(1+p_y-p_yy-p_xp_y+p_xy)+2p_x+p_y-1\\
            &=(2+\alpha p_y - \alpha y)p_x-\alpha(1+p_y-p_y y)+p_y-1
        \end{align*}
        is an increasing function of $p_x$ since $2+\alpha p_y - \alpha y\geq 2-\frac{\alpha}{2}\geq0$,
        \begin{align*}
            &-\alpha(1+p_y-p_yy-p_xp_y+p_xy)+2p_x+p_y-1\\
            &\leq -\alpha(1-p_y y+y)+1+p_y\\
            &= -(1-p_y)(1+\alpha y)-\alpha+2\leq -\alpha+2<0.
        \end{align*}
        Hence, the function is minimized when $z$ is maximized.
        Setting $z=1$,
        \begin{align*}
            \mathcal{C}'&\geq \alpha\cdot\left((1-p_y)(1-y)+(1-p_x)y\right)-\left(3-2p_x-2p_y+p_x p_y\right).
        \end{align*}
        As mentioned above, it is sufficient to check on $x=1$ and $x=1-y$.
        \begin{enumerate}
            \item $x=1$
            \begin{align*}
                 \mathcal{C}'&= \alpha\cdot(1-p_y)(1-y)-(1-p_y)\\
                 &=(1-p_y)(\alpha-1-\alpha y)\\
                 &\geq (1-p_y)\left(\frac{\alpha}{2}-1\right)\geq 0.
            \end{align*}
            \item $x=1-y$

            Since $x=1-y>1/2$, $p_x=\frac{3}{10}x+\frac{7}{10}=1-\frac{3}{10}y$.
            Therefore,
            \begin{align*}
                \mathcal{C}'&= \alpha\cdot\left((1-p_y)(1-y)+\frac{3}{10}y^2\right)-\left(1+\frac{3}{10}y\right)(2-p_y)+1.
            \end{align*}
            If $y<0.19$,
            \begin{align*}
                \mathcal{C}'&= \alpha\cdot\left(1-y+\frac{3}{10}y^2\right)-2\left(1+\frac{3}{10}y\right)+1\\
                &=\frac{1}{200}(129y^{2}-550y+230)>0.65\geq 0.
            \end{align*}
            Otherwise, i.e.\ $0.19\leq y<1/2<0.5095$,
            \[
                \mathcal{C}'\approx 0.74331 + 2.39737 y - 19.7409 y^2 + 24.0007 y^3,
            \]
            whose minimum value in the region is approximately $7.236\times 10^{-6}\geq 0$ in $y\approx 0.4788$.
        \end{enumerate}
        
        \item $z< y< 1/2$
        \begin{align*}
            \mathcal{C}'
            &=\alpha\cdot\left((1-p_yz)(1-z)+(1-p_xp_y)(1-z)+(1-p_xz)y\right)\\
            &\quad-\left(1-p_yz+p_x+z-2p_xz+(1-p_x)(1-p_y)\right)\\
            &=\alpha p_y z^2+(-\alpha(2+p_y-p_xp_y+p_xy)+2p_x+p_y-1)z\\
            &\quad+\alpha(2+y-p_xp_y)-2+p_y-p_xp_y.
        \end{align*}
        Differentiating by $z$,
        \begin{align*}
            &2\alpha p_yz-\alpha(2+p_y-p_xp_y+p_xy)+2p_x+p_y-1\\
            &\leq \alpha p_y-\alpha(2+p_y-p_xp_y+p_xy)+2p_x+p_y-1\\
            &= (-2\alpha +2p_x+p_y)-\alpha((1-p_x)p_y+p_xy)-1\leq 0
        \end{align*}
        since $z<1/2$ and $-2\alpha +2p_x+p_y\leq -2\alpha+3=-1.3<0$.
        Therefore, the function is minimized when $z$ is maximized.
        
        Plugging $z=y$, this now reduce to Case 3 with $z=y$ due to the continuity of $\mathcal{C}'$.
        
    \end{enumerate}

    \subsubsection{$(\circ,-,-)$ Triangles}
    \label{subsec:CCC0--}
    
    As in \ref{subsec:CCC000}, $2e.lp_w(u,v)\geq e.cost_w(u,v)$.
    For other two terms,
    \begin{align*}
        e.lp_u(v,w)+e.lp_v(w,u)&\geq(1-p_{uv}z)(1-y)+(1-p_{uv}y)(1-z)\\
        &\geq (1-p_{uv})(1-y)+(1-p_{uv})(1-z)\\
        &=(1-p_{uv})(1-p_{vw})+(1-p_{uv})(1-p_{wu})\\
        &=e.cost_v(w,u)+e.cost_u(v,w).
    \end{align*}
    Hence, $2\cdot LP-ALG\geq 0$.
\\
\\
This completes our case analysis.

\section{Conclusion}\label{sec:con}

In this work, we studied two important variants of correlation clustering: \emph{pseudometric-weighted correlation clustering} and \emph{chromatic correlation clustering }. For both problems, we developed and analyzed specialized rounding functions that are essential for achieving improved approximation guarantees via the \textsc{LP-Pivot} algorithm.

For the pseudometric-weighted setting, we proposed a piecewise-linear rounding function tailored for the setting that achieves a $10/3$-approximation, and showed that no alternative function within our analytical framework can yield a better factor.
For the chromatic correlation clustering  variant, we designed a distinct rounding function  that respects the constraints imposed by color restrictions and achieves an approximation factor of $2.15$. The function is constructed using a piecewise-linear form and leverages a careful analysis of triple costs.

Overall, our work highlights the importance of designing principled and variant-specific rounding strategies to extend LP-based techniques to structured clustering problems, yielding  strong theoretical guarantees.


\bibliographystyle{plain}
\bibliography{reference}

\newpage
\appendix

\appendix

\section{Proof for Theorem~\ref{thm:wCC_alg}}
We can apply the same argument with Appendix~\ref{sec:wCC_old}, with the partition by regions (Figure~\ref{fig:wCC_new_region}) and formula for $e.cost$ and $e.lp$ (Table~\ref{tab:wCC_new}) changed accordingly.
\begin{figure}
    \centering
    \includegraphics[width=0.40\linewidth]{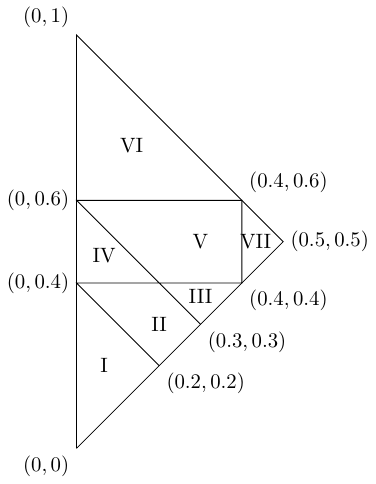}
    \caption{Configuration of regions induced from~\eqref{eq:round_wCC}.}
    \label{fig:wCC_new_region}
\end{figure}
\begin{table}
\caption{Formula for $e.cost$ and $e.lp$ for each configuration.}
\label{tab:wCC_new}

\begin{tabular}{llrrrr}
\toprule
 &  & \multicolumn{2}{c}{$x_{uv}=x$} & \multicolumn{2}{c}{$x_{vw}=y$} \\
\cmidrule(r){3-4}\cmidrule(r){5-6}
Region       & Sign     & $e.cost$        & $e.lp$       & $e.cost$        & $e.lp$            \\
       \midrule
I      & `$+$'  & $0$ &  $x$   &    $0$ &  $y$     \\
       & `$-$'  & $1$ & $1-x$ &   $1$  &  $1-y$    \\
II     & `$+$'  &   $\frac{5}{3}z$     &$x$   &  $\frac{5}{3}z$   &$y$\\
       & `$-$'  &$1-\frac{5}{3}z$  &$1-x$&   $1-\frac{5}{3}z$ &$1-y$\\
III    & `$+$'  &   $1$     &$x$   &  $1$   &$y$\\
       & `$-$'  &$0$  &$1-x$&   $0$ &$1-y$\\
IV    & `$+$'  &$\frac{5}{3}y+\frac{5}{3}z-\frac{50}{9}yz$ &  $(1-\frac{25}{9}yz)x$  &  $\frac{5}{3}z$ &  $y$\\
       & `$-$'  &$(1-\frac{5}{3}y)(1-\frac{5}{3}z)$  &  $(1-\frac{25}{9}yz)(1-x)$ & $1-\frac{5}{3}z$&$1-y$\\
V      & `$+$'  &     $1-\frac{5}{3}y$&  $(1-\frac{5}{3}y)x$&  $1$&   $y$\\
       & `$-$'  &$0$&  $(1-\frac{5}{3}y)(1-x)$&    $0$&  $1-y$\\
VI     & `$+$'  &$0$& $0$&$1$&$y$\\
       & `$-$'  &$0$& $0$&$0$&$1-y$\\ 
VII     & `$+$'  &$1-\frac{5}{3}y$&  $(1-\frac{5}{3}y)x$& $1-\frac{5}{3}x$&$(1-\frac{5}{3}x)y$\\
       & `$-$'  &$0$&   $(1-\frac{5}{3}y)(1-x)$&   $0$&  $(1-\frac{5}{3}x)(1-y)$\\
       \bottomrule
\end{tabular}
\begin{tabular}{llrr}
\toprule
 &   & \multicolumn{2}{c}{$x_{wu}=z=x+y$} \\
\cmidrule(r){3-4}
Region & Sign & $e.cost$ & $e.lp$ \\
       \midrule
I      & `$+$'  &  $0$  & $z$   \\
       & `$-$'  & $1$ & $1-z$  \\
II     & `$+$'  &$0$&$z$\\
       & `$-$'  &$1$&$1-z$\\
III    & `$+$'  &$0$&$z$\\
       & `$-$'  &$1$&$1-z$\\
IV    & `$+$'  &$\frac{5}{3}y$&  $z$\\
       & `$-$'  & $1-\frac{5}{3}y$&  $1-z$\\
V      & `$+$'  & $\frac{5}{3}y$&  $z$\\
       & `$-$'    $1-\frac{5}{3}y$&$1-z$\\
VI     & `$+$'  &$1$& $z$\\
       & `$-$'  &   $0$&  $1-z$\\ 
VII     & `$+$' & $\frac{5}{3}x+\frac{5}{3}y-\frac{50}{9}xy$&    $(1-\frac{25}{9}xy)z$\\
       & `$-$'  &    $(1-\frac{5}{3}x)(1-\frac{5}{3}y)$&    $(1-\frac{25}{9}xy)(1-z)$\\
       \bottomrule
\end{tabular}
\end{table}

Define $(x,y,z):=(x_{uv},x_{vw},x_{wu})$ and \emph{w.l.o.g.}\ $x\leq y\leq z$.

\subsection{Triangle inequality is tight.}
\subsubsection{Region I, II, III, VI}
Affine formulae; refer to Table~\ref{tab:wCC_new_extremal}.
Also note that the factor $10/3$ is tight for $(0,\frac{2}{5},\frac{2}{5})$ with $(+,+,+)$, 

\begin{table}[]
\caption{Value of $\frac{10}{3}\cdot e.lp-e.cost$ for each extremal point of the region.}
    \label{tab:wCC_new_extremal}
    \centering
\begin{tabular}{lllrrr}
\toprule
Region&$(x,y)$&Sign& $\frac{10}{3}\cdot e.lp_w-e.cost_w$& $\frac{10}{3}\cdot e.lp_u-e.cost_u$&$\frac{10}{3}\cdot e.lp_v-e.cost_v$\\
\midrule
I&$(0,0)$&`$+$'&$0$&$0$&$0$\\
&&`$-$'&$7/3$&$7/3$&$7/3$\\
&$(0,\frac{2}{5})$&`$+$'&$0$&$4/3$&$4/3$\\
&&`$-$'&$7/3$&$1$&$1$\\
&$(\frac{1}{5},\frac{1}{5})$&`$+$'&$2/3$&$2/3$&$4/3$\\
&&`$-$'&$5/3$&$5/3$&$1$\\
II&$(0,\frac{2}{5})$&`$+$'&$-2/3$&$2/3$&$4/3$\\
&&`$-$'&$3$&$5/3$&$1$\\
&$(\frac{1}{5},\frac{1}{5})$&`$+$'&$0$&$0$&$4/3$\\
&&`$-$'&$7/3$&$7/3$&$1$\\
&$(\frac{1}{5},\frac{2}{5})$&`$+$'&$-1/3$&$1/3$&$2$\\
&&`$-$'&$8/3$&$2$&$1/3$\\
&$(\frac{3}{10},\frac{3}{10})$&`$+$'&$0$&$0$&$2$\\
&&`$-$'&$7/3$&$7/3$&$1/3$\\
III&$(\frac{1}{5},\frac{2}{5})$&`$+$'&$-1/3$&$1/3$&$2$\\
&&`$-$'&$8/3$&$2$&$1/3$\\
&$(\frac{3}{10},\frac{3}{10})$&`$+$'&$0$&$0$&$2$\\
&&`$-$'&$7/3$&$7/3$&$1/3$\\
&$(\frac{2}{5},\frac{2}{5})$&`$+$'&$1/3$&$1/3$&$8/3$\\
&&`$-$'&$2$&$2$&$-1/3$\\
VI&$(0,\frac{3}{5})$&`$+$'&$0$&$1$&$1$\\
&&`$-$'&$0$&$4/3$&$4/3$\\
&$(0,1)$&`$+$'&$0$&$7/3$&$7/3$\\
&&`$-$'&$0$&$0$&$0$\\
&$(\frac{2}{5},\frac{3}{5})$&`$+$'&$0$&$1$&$7/3$\\
&&`$-$'&$0$&$4/3$&$0$\\
\bottomrule
\end{tabular}
\end{table}

\subsubsection{Region IV}
\label{subsubsec:new_IV}
The argument here is analogous with \ref{subsubsec:old_III}: 
$(1-\frac{25}{9}yz)x\leq (1-\frac{25}{9}yz)(1-x)$ as $x\leq \frac{1}{2}$;
$\frac{5}{3}y+\frac{5}{3}z-\frac{50}{9}yz\geq (1-\frac{5}{3}y)(1-\frac{5}{3}z)$ as
\[\frac{5}{3}y+\frac{5}{3}z-\frac{50}{9}yz- \left(1-\frac{5}{3}y\right)\left(1-\frac{5}{3}z\right)=\frac{1-(2-5y)(2-5z)}{3}\]
and $y,z\in \left[2/5,3/5\right]$.
Therefore, $\frac{10}{3}\cdot e.lp_w-e.cost_w$ is smaller with sign `$+$'.

Replacing $x=z-y$,
\[\frac{10}{3}\cdot e.lp_w-e.cost_w=-\frac{250}{27}yz^2+\frac{250}{27}y^2z+\frac{50}{9}yz-5y+\frac{5}{3}z.\]
Fixing $y$, this formula is concave of $z$, hence minimized in either $z=y$ or $z=3/5$, which yield $\frac{50}{9}y^2-\frac{10}{3}y$ and $\frac{50}{9}y^2-5y+1$ each.
Within the range $y\in[2/5,3/5]$, the minimum value of each is $-1/9$ in $y=2/5$ and $-1/8$ in $y=9/20$.
Thus, $\frac{10}{3}\cdot e.lp_w-e.cost_w\geq -1/8$.

Finally, $\frac{10}{3}\cdot e.lp_u-e.cost_y\geq 1/3$, minimized at $(y,z)=(2/5,3/5)$ with `$+$' sign; $\frac{10}{3}\cdot e.lp_v-e.cost_v\geq 2/3$, minimized at $(y,z)=(2/5,2/5)$ with `$+$' sign.

Therefore, any objective value is at least $1/3-1/8=5/24\geq 0$ in the region.

\subsubsection{Region V}
\label{subsubsec:new_V}
The argument here is analogous with \ref{subsubsec:old_IV}:
$\frac{10}{3}\cdot e.lp_w-e.cost_w$ is smaller with sign `$+$', which is $(\frac{10}{3}x-1)(1-\frac{5}{3}y)$.
This has a minimum value of $-1/9$ in $(x,y)=(1/5,2/5)$.

The formula for $\frac{10}{3}\cdot e.lp_v-e.cost_v$ is either $5/3(2x+y)$ or $7/3-5/3(2x+y)$, whose global minimum value is $0$ in $(x,y)=(2/5,3/5)$ with sign `$-$'.

The formula for $\frac{10}{3}\cdot e.lp_u-e.cost_u$ is either $10/3\cdot y -1$ or $10/3(1-y)$, whose global minimum value is $1/3$ in $y=2/5$ with sign `$+$'.
Thus, $(\frac{10}{3}\cdot e.lp_w-e.cost_w)+(\frac{10}{3}\cdot e.lp_u-e.cost_u)$ and $(\frac{10}{3}\cdot e.lp_u-e.cost_u)+(\frac{10}{3}\cdot e.lp_v-e.cost_v)$ are at least $-1/9+1/3=2/9\geq 0$.

Finally,
\begin{align*}
    &\left(\frac{10}{3}\cdot e.lp_w-e.cost_w\right)+\left(\frac{10}{3}\cdot e.lp_v-e.cost_v\right)\\
    &\quad\geq \min\left\{\left(\frac{10}{3}x-1\right)\left(1-\frac{5}{3}y\right)+\frac{5}{3}(2x+y),\left(\frac{10}{3}x-1\right)\left(1-\frac{5}{3}y\right)+\frac{7}{3}-\frac{5}{3}(2x+y)\right\}\\
    &\quad=\min\left\{\left(\frac{10}{3}x-2\right)\left(2-\frac{5}{3}y\right)+3, \frac{50}{9}xy+\frac{4}{3} \right\}\\
    &\quad\geq \min\left\{1,\frac{4}{3}\right\}\geq 1\geq 0.
\end{align*}
\subsubsection{Region VII}
\label{subsubsec:new_VII}
The argument here is analogous with \ref{subsubsec:old_VI}:
$\frac{10}{3}\cdot e.lp_w-e.cost_w$ is smaller with sign `$+$', which is $(\frac{10}{3}x-1)(1-\frac{5}{3}y)$.
This has a minimum value of $0$ in $(x,y)=(0.4,0.6)$.

The formula for $\frac{10}{3}\cdot e.lp_u-e.cost_u$ is either $(1-\frac{5}{3}x)(\frac{10}{3}y-1)$ or $\frac{10}{3}(1-\frac{5}{3}x)(1-y)$, whose global minimum value is $1/9$ in $(x,y)=(2/5,2/5),(1/2,1/2)$ with sign `$+$'.

For $\frac{10}{3}\cdot e.lp_v-e.cost_v$,
\begin{align*}
    &\frac{10}{3}\left(1-\frac{25}{9}xy\right)(2z-1)-\left(
    \frac{5}{3}x+\frac{5}{3}y-\frac{50}{9}xy-\left(1-\frac{5}{3}x\right)\left(1-\frac{5}{3}y\right)\right)\\
    &\quad\geq 2\left(1-\frac{25}{9}xy\right)-\left(-\frac{25}{3}xy+\frac{10}{3}x+\frac{10}{3}y-1\right)\\
    &\quad=\frac{25}{9}xy-\frac{10}{3}x-\frac{10}{3}y+3\\
    &\quad = \left(2-\frac{5}{3}x\right)\left(2-\frac{5}{3}y\right)-1\geq \frac{1}{3}\geq 0,
\end{align*}
since $z\geq\frac{4}{5}$.
Therefore, the formula with sign `$-$' is smaller, which is $\frac{10}{3}(1-\frac{25}{9}xy)(1-z)-(1-\frac{5}{3}x)(1-\frac{5}{3}y)$.

Define $p=xy$, then
\[\frac{10}{3}\left(1-\frac{25}{9}xy\right)(1-z)-\left(1-\frac{5}{3}x\right)\left(1-\frac{5}{3}y\right)=\frac{250}{27}pz-\frac{5}{3}z-\frac{325}{27}p+\frac{7}{3},\]
with region $z\in [4/5,1],\,p\in\left[\frac{2}{5}z-\frac{4}{25},\left(\frac{z}{2}\right)^2\right]$.
Fixing the value of $z$,it is a decreasing function with respect to $p$. Therefore, the function is minimized when $p=\left(\frac{z}{2}\right)^2$, which now becomes:
\[\frac{125}{54}z^3-\frac{325}{108}z^2-\frac{5}{3}z+\frac{7}{3}.\]
This function decreases in the range $z\in[4/5,1]$, hence the minimum value is $-1/36$ in $z=1$.
Therefore, $(\frac{10}{3}\cdot e.lp_w-e.cost_w)+(\frac{10}{3}\cdot e.lp_u-e.cost_u)$ and $(\frac{10}{3}\cdot e.lp_u-e.cost_u)+(\frac{10}{3}\cdot e.lp_v-e.cost_v)$ are at least $1/9-1/36=1/12\geq 0$.

Finally,
\begin{align*}
    &\left(\frac{10}{3}\cdot e.lp_w-e.cost_w\right)+\left(\frac{10}{3}\cdot e.lp_v-e.cost_v\right)\\
    &\quad\geq\left(\frac{10}{3}x-1\right)\left(1-\frac{5}{3}y\right)+\frac{10}{3}\left(1-\frac{25}{9}xy\right)(1-z)-\left(1-\frac{5}{3}x\right)\left(1-\frac{5}{3}y\right)\\
    &\quad = (5x-2)\left(1-\frac{5}{3}y\right)+\frac{10}{3}\left(1-\frac{25}{9}xy\right)(1-z)\\
    &\quad \geq 0\cdot 0+\frac{10}{3}\cdot \frac{11}{36}\cdot 0 =0,
\end{align*}
since $x\geq \frac{2}{5},\,y\leq\frac{3}{5},\,z\leq 1$.

\subsection{ All coordinates belong to endpoint of segments.}
There are also $4$ tuples of $(x,y,z)$ to verify, which are fully covered in Table~\ref{tab:wCC_bd_new}. Note that there are now only single point $2/5$ where rounding functions are not continuous.
\begin{table}[]
\caption{Value of $\frac{10}{3}\cdot e.lp-e.cost$ for each points in consideration of Case 2.}
    \label{tab:wCC_bd_new}
    \centering
\begin{tabular}{llrrr}
\toprule
$(x,y,z)$&Sign& $\frac{10}{3}\cdot e.lp_w-e.cost_w$& $\frac{10}{3}\cdot e.lp_u-e.cost_u$&$\frac{10}{3}\cdot e.lp_v-e.cost_v$\\
\midrule 
$(\frac{3}{5},\frac{3}{5},1)$&`$+$'&$0$&$0$&$0$\\
&`$-$'&$0$&$0$&$0$\\
$(\frac{2}{5}-\delta,1,1)$&`$+$'&$0$&$7/3$&$7/3$\\
&`$-$'&$0$&$0$&$0$\\
$(\frac{2}{5}+\delta,1,1)$&`$+$'&$0$&$7/9$&$7/9$\\
&`$-$'&$0$&$0$&$0$\\
$(\frac{3}{5},1,1)$&`$+$'&$0$&$0$&$0$\\
&`$-$'&$0$&$0$&$0$\\
$(1,1,1)$&`$+$'&$0$&$0$&$0$\\
&`$-$'&$0$&$0$&$0$\\
\bottomrule
\end{tabular}
\end{table}
\\
\\
This completes our case analysis.

\end{document}